\DeclareMathAlphabet\mathcal{OMS}{cmsy}{m}{n}
\SetMathAlphabet\mathcal{bold}{OMS}{cmsy}{b}{n}
\DeclareSymbolFont{AMSb}{U}{msb}{m}{n}
\DeclareSymbolFontAlphabet{\mathbb}{AMSb}
\DeclareSymbolFont{numbers}{T1}{ptm}{m}{n}
\DeclareMathSymbol{0}\mathalpha{numbers}{"30}
\DeclareMathSymbol{1}\mathalpha{numbers}{"31}
\DeclareMathSymbol{2}\mathalpha{numbers}{"32}
\DeclareMathSymbol{3}\mathalpha{numbers}{"33}
\DeclareMathSymbol{4}\mathalpha{numbers}{"34}
\DeclareMathSymbol{5}\mathalpha{numbers}{"35}
\DeclareMathSymbol{6}\mathalpha{numbers}{"36}
\DeclareMathSymbol{7}\mathalpha{numbers}{"37}
\DeclareMathSymbol{8}\mathalpha{numbers}{"38}
\DeclareMathSymbol{9}\mathalpha{numbers}{"39}
\renewcommand{\operator@font}{\mathgroup\symnumbers}
\DeclareRobustCommand{\%}{{\mbox{\fontencoding{\encodingdefault}\fontfamily{ptm}\selectfont
    \symbol{`\%}}}}
\definecolor{LightCyan}{rgb}{0.88,1,1}
\long\def\com#1{}
\newcommand{\itpara}[1]{\medskip\noindent\textit{#1}}
\renewcommand{\paragraph}[1]{\medskip\noindent\textbf{#1}}
\newtheorem{definition}{Definition}
\newtheorem{proposition}{Proposition}
\let\c@table\c@figure
\newcommand{\ignore}[1]{}
\newcommand{\Hash}{\mathsf{Hash}}
\newcommand{\F}{\mathbb{F}}
\newcommand{\G}{\mathbb{G}}
\newcommand{\Z}{\mathbb{Z}}
\newcommand{\N}{\mathbb{N}}
\newcommand{\calO}{\ensuremath{\mathcal{O}}}
\newcommand{\boldone}{\ensuremath{\boldsymbol{1}}}
\newcommand{\Convert}{{\sf Convert}}
\newcommand{\Keep}{{\sf Keep}}
\newcommand{\Lose}{{\sf Lose}}
\newcommand{\zo}{\ensuremath{\{0,1\}}} \newcommand{\zon}{\ensuremath{\{0,1\}^n}} 
\theoremstyle{plain}
\newtheorem{theorem}{Theorem}[section]
\newtheorem{claim}[theorem]{Claim}
\newtheorem{lemma}[theorem]{Lemma}
\theoremstyle{definition}
\newtheorem{notation}[theorem]{Notation}
\theoremstyle{remark}
\newtheorem*{remark}{Remark}
\newcommand{\esm}[1]{\ensuremath{#1}}
\newcommand{\ms}[1]{\esm{\mathsf{#1}}}
\newcommand{\pp}{\ms{pp}}
\newcommand{\st}{\ms{st}}
\newcommand{\val}{\ms{val}}
\newcommand{\eval}{\ms{Eval}}
\newcommand{\Gen}{\ms{Gen}}
\newcommand{\getsr}{\mathrel{\mathpalette\rgetscmd\relax}}
\newcommand{\rgetscmd}{\ooalign{$\leftarrow$\cr
    \hidewidth\raisebox{1.2\height}{\scalebox{0.5}{\ \rm R}}\hidewidth\cr}}
\newcommand{\abs}[1]{\left| #1 \right|}
\newcommand{\Otilde}{\cramped{\widetilde{O}}}
\newcounter{TaskCounter}
\newcommand{\Task}[1]{\refstepcounter{TaskCounter}\paragraph{Task~\Roman{TaskCounter}: #1}}
\newcommand{\Protocol}[1]{\refstepcounter{figure}\paragraph{Protocol~\arabic{figure}: #1.}}
\crefname{proto}{Protocol}{Protocols}
\newcommand{\Algo}[1]{\refstepcounter{figure}\paragraph{Algorithm~\arabic{figure}: #1.}}
\crefname{algo}{Algorithm}{Algorithms}
\newcommand{\Eval}{{\sf Eval}}
\newcommand{\weight}{w}
\newcommand{\heavy}{H}
\newcommand{\Next}{{\sf Next}}
\newcommand{\IDPF}{{\sf IDPF}}
\newcommand{\Sim}{{\sf Sim}}
\newcommand{\Real}{{\sf Real}}
\newcommand{\Ideal}{{\sf Ideal}}
\newcommand{\Prefix}{{\sf Prefix}}
\newcommand{\name}{Poplar\xspace}
\newcommand{\Name}{Poplar\xspace}
\newcommand{\abbr}[2]{#2}
\title{Lightweight Techniques for Private Heavy Hitters}
\author{
Dan Boneh\\
\emph{Stanford}\and
Elette Boyle\\
\emph{IDC Herzliya}
\and Henry Corrigan-Gibbs\\
\emph{EPFL and MIT CSAIL}
\and Niv Gilboa\\
\emph{Ben-Gurion University}
\and Yuval Ishai\\
\emph{Technion}
}
\date{}
\begin{document}
\maketitle

\thispagestyle{plain}

\paragraph{Abstract.}
This paper presents \name, a new system for solving the
\emph{private heavy-hitters problem}.
In this problem, there are many clients and 
a small set of data-collection servers.
Each client holds a private bitstring.
The servers want to recover the set of all popular strings,
without learning anything else about any client's string.
A web-browser vendor, for instance, can use \name to figure out 
which homepages are popular, without learning any user's homepage.
We also consider the simpler {\em private subset-histogram problem},
in which the servers want to count how many clients hold strings in a
particular set without revealing this set to the clients.

\Name uses two data-collection servers and, in 
a protocol run, each client send sends only a single message
to the servers.
\Name protects client privacy against arbitrary
misbehavior by one of the servers and
our approach requires no public-key cryptography
(except for secure channels), nor general-purpose multiparty computation.
Instead, we rely on \emph{incremental distributed point functions}, 
a new cryptographic tool that allows a client to succinctly
secret-share the labels on the nodes of an
exponentially large binary tree, provided that the
tree has a single non-zero path.
Along the way, we develop new general tools for providing 
malicious security in applications of distributed point functions. 

A limitation of \name is that it reveals to the servers slightly 
more information than the set of popular strings itself. 
We precisely define and quantify this leakage and explain how to ameliorate
its effects.
In an experimental evaluation 
with two servers on opposite sides of the U.S.,
the servers can find the 200 most popular
strings among a set of 400,000 client-held 256-bit strings in 54 minutes.
Our protocols are highly parallelizable. We estimate that with 20 physical
machines per logical server, \name could compute heavy hitters over
ten million clients in just over one hour of computation.

\section{Introduction}
To improve their products, manufacturers of
hardware devices and software applications 
collect information about how their products perform in practice.
For example, when your web browser crashes today,
it prompts you to send an error report
to the vendor with the URL that triggered the crash.
For the browser-vendor, it is important to know which URLs 
are responsible for the majority of crashes.
But since these crash reports contain the URLs that you
(the user) have been visiting, sending these reports
leaks information about your browsing history to the vendor.
It takes just one subsequent data breach or
one malicious insider to expose these 
reports---and the information they contain about your browsing history---to the world.

This data-collection task is an instance of the
\emph{private heavy-hitters problem}.
In this problem, there are many clients and 
a small set of data-collection servers.
Each client holds a string (e.g., a URL that caused a browser crash).
For some threshold $t \in \N$,
the servers want to recover every string
that more than $t$ clients hold.
In this and other applications, each client's
string comes from a large universe
(the set of all URLs), so any solution that
requires enumerating over the set of all possible strings
is infeasible.

This problem comes up in an array of 
private data-collection applications:
a cellphone vendor wants to learn which mobile apps
consume the most minutes of user attention per day, without learning 
how much each person uses each app,
an electric-car company wants to learn on which roads
its cars most often run low on battery, without
learning which car was where, 
and so on.

\medskip

In this paper, we introduce \name, a system
that solves this private heavy-hitters problem
using a new suite of lightweight cryptographic techniques. 
\Name is relatively simple to implement,
is concretely efficient, unlike methods based on 
general-purpose multiparty computation~\cite{GMW87,Yao}, and 
outperforms existing approaches based on secure aggregation~\cite{Prio,melis2016}.
We expect the cryptographic tools developed in this work
to be useful in other contexts.

\Name works in the setting in which clients
communicate with two non-colluding data-collection servers.
\Name protects \emph{client privacy} as long as 
one of the two servers is honest (the other may
deviate arbitrarily from the protocol and may collude
with an unbounded number of malicious clients).
For example, the maintainer of an app store could
run one \name server and the app developer could run the other.
\Name protects \emph{correctness} against
any number of malicious clients. 
That is, the worst a malicious client can do to
disrupt the system's execution is to lie about
its own input string.

\Name requires no public-key cryptographic
operations, apart from those needed to establish
secret channels between the parties.
In terms of communication, if each client
holds an $n$-bit string and we want to achieve
$\lambda$-bit security, each client
sends a single message, of roughly $\lambda n$ 
bits, to the servers (ignoring low-order terms).
Since \name requires each client to send only
a single message to the servers, it naturally
tolerates unreliable clients: each client needs to stay online
only long enough to send its single message to the servers.
In a deployment with $C$ clients, the servers
communicate $\lambda n C$ 
bits with each other (again, ignoring low-order terms).
In terms of computation, the client invokes a
length-doubling pseudorandom generator, such as
AES in counter mode, $O(n)$ times.
When searching for strings that more than a $\tau \in (0,1]$ fraction
of clients hold, the servers perform $\approx n C/\tau$ evaluations
of a length-doubling pseudorandom generator.

To evaluate \name in practice,
we implement the end-to-end system and evaluate it on Amazon EC2 machines
on opposite sides of the U.S.
In this cross-country configuration,
we consider a set of 400,000 clients,
each holding a 256-bit string
(long enough to hold a 40-character domain name).
We configure the two servers to compute the set of heavy hitters 
held by more than 0.1\% of these clients.
The protocol between the two servers 
takes 54 minutes in total and requires under 70 KB of communication per user. 
With this parameter setting, \name concretely 
requires over $100\times$ less communication (between the servers)
and server-side computation compared to approaches based on existing cryptographic tools.

\paragraph{Our techniques.}
Our first step to solving the private heavy-hitters problem is to study
an independently useful simpler problem of computing 
\emph{private subset histograms}.
In this problem, each client holds an $n$-bit string, as before.
Now, the servers have a small set $S$ of strings
(unknown to the clients) and, for each string
$\sigma \in S$, the servers want to know how many
clients hold string $\sigma$, without learning anything else
about any client's string.
Our starting point is a simple 
protocol for this problem 
from prior work~\cite{BGI16-FSS}, in which each client sends each server a single message.
This protocol relies on the cryptographic tool of
distributed point functions~\cite{GI14,BGI15,BGI16-FSS}.
(A distributed point function is essentially a compressed
secret-sharing of a function that has
a single non-zero output.) The prior protocol~\cite{BGI16-FSS} offers a partial defense 
against malicious clients at the expense of compromising the privacy of clients against a malicious server.

Our first technical contribution is to modify this protocol to 
simultaneously protect correctness against malicious clients and achieve privacy against
a malicious server.
To do so, we develop a new lightweight malicious-secure protocol that the two servers can
run to check that they hold additive secret shares of a vector that is zero everywhere except with
a one in a single position.
Prior approaches either required additional non-colluding servers~\cite{Riposte},
did not provide malicious security~\cite{BGI16-FSS}, 
had relatively large client-to-server communication (as in Prio~\cite{Prio}),
or required additional rounds of interaction
between the clients and servers~\cite{Express}.
Applying our new building-block immediately improves the efficiency of existing
privacy-preserving systems for advertising~\cite{Adnostic} and
messaging~\cite{Dissent,Riposte,Express}.

Perhaps even more important, prior protocols~\cite{BGI16-FSS} 
do not defend against a subtle ``double-voting'' attack.
In this attack, a malicious client can cast tentative votes for
a set $S'$ of two or more strings. 
The servers only catch the cheating client if $|S'\cap S|\ge 2$,
where $S$ is the set of strings whose popularity counts the servers compute.
To prevent this kind of attack, we leverage a refined type of distributed point function
that we term \emph{extractable distributed point functions} (``extractable DPFs''). Roughly speaking, with an extractable DPF it is possible to extract from the actions of a malicious client an honest strategy that would achieve a similar effect. We show that a variant of 
the distributed-point-function construction of prior work~\cite{BGI16-FSS} is extractable in this sense when we model the underlying PRG as a random oracle.

Next, we use our protocol for private subset histograms to construct
a protocol for the $t$-heavy hitters problem.
Our approach follows that of prior work which uses subset-histograms
protocols, in the settings streaming and local-differential privacy, 
to identify heavy hitters~\cite{CKMS03,countmin,BNST17,ZKMSL20}.

In the $t$-heavy hitters problem, each client~$i$ holds 
a string $\alpha_i \in \zon$ and the servers want to learn the
set of all strings that more than $t$ clients hold, 
for some parameter $t \in \N$.
Our idea is to have the client and servers run our private subset-histogram
protocol $n$ times.
After the $\ell$th execution of the subset-histogram protocol, 
the servers learn a set $S_\ell \subseteq \zo^\ell$ 
that contains the $\ell$-bit prefix of every $t$-heavy hitter.
After~$n$ executions, the servers learn the set $S_n$ of all
$t$-heavy hitter strings.

In more detail, the clients, for their part, 
participate in $n$ executions of the 
subset-histogram protocol.
In the $\ell$th execution, for $\ell=1,\ldots,n$,
a client holding a string $\alpha \in \zon$
participates in the protocol using the prefix $\alpha|_\ell \in \zo^\ell$
as its input to the protocol, 
where $\alpha|_\ell$ is the $\ell$-bit prefix of $\alpha$.
These executions all run in parallel, so each client in fact
only sends a single message to the servers.

The servers participate in the first execution of the 
subset-histogram protocol using the set of two prefixes $S_1 = \zo$,
and learn the histogram for this set $S_1$
(i.e., the number of client strings that begin with a `0'
and the number of client strings that being with a `1').
They prune from $S_1$ all the prefixes that occur fewer than $t$ times.
Let $T_1 \subseteq S_1$ be the remaining set of prefixes.
The servers then append a `0' and a `1' to every string in $T_1$
to obtain the set $S_2 = T_1 \times \zo$. 
In the second execution of the subset-histogram protocol, 
the servers learn the histogram for the set $S_2$.
Again, they prune from $S_2$ all the elements that occur fewer than $t$ times.
Let $T_2 \subseteq S_2$ be the remaining set of prefixes.
They compute $S_3 = T_2 \times \zo$, and learn the histogram for $S_3$.
They prune $S_3$ and continue this way until finally after $n$ executions
of the subset-histogram protocol,
they obtain the set $T_n$ of all $t$-heavy hitters. 
At every step in this protocol, the size of the set $S_\ell$ is at 
most twice the size of the final answer $T_n$. 

The straightforward implementation of the above
scheme requires each client to communicate $\Omega(n^2)$ bits to each server,
where $n$ is the length of each client's private string.
This is because each client participates in $n$ instances of the private-subset-histogram
protocol, and each protocol run requires the client to send a size-$\Omega(n)$
distributed-point-function key to the servers.
Since $n \approx 256$ in our applications, the quadratic per-client
communication cost is substantial.

To reduce this cost, 
we introduce \emph{incremental distributed point functions} (``incremental DPFs''),
a new cryptographic primitive that reduces the client-to-server communication
from quadratic in the client's string length $n$ to linear in $n$.
Conceptually, this primitive gives the client a way to succinctly 
secret-share the weights on a tree that has a single path of non-zero weight in an incremental fashion.

\paragraph{Limitations.}
The main downside of \name is that it
reveals some additional---though modest and precisely
quantified---information to the data-collection
servers about the distribution of client-held strings, 
in addition to the set of heavy hitters itself. In
particular, even when an arbitrary number of
malicious clients collude with a malicious server,
this leakage depends only on the {\em multiset} of
strings held by the honest clients, without
revealing any association between clients and
strings in this multiset. Moreover, the amount of
partial information leaked about this multiset is
comparable to the output length, and only scales
logarithmically with the number of clients $C$
when the servers search for strings that
a constant fraction of clients hold. See the
precise definition of the leakage in
Section~\ref{sec:secprop}.

To protect client privacy against even this modest leakage,
we can configure \name to 
provide $\epsilon$-differential privacy~\cite{DR14}, 
in addition to its native MPC-style security properties.
The differential-privacy guarantee then ensures that
\name will never reveal ``too much'' about any client's string,
even accounting for the leakage.
To achieve $\epsilon$-differential privacy with $C$ clients, \name introduces
additive $O(1/\epsilon)$ error, compared with the larger $\Omega(\sqrt{C}/\epsilon)$ error
inherent to protocols based on randomized response~\cite{Rappor,RapporII,BS15,BNST17,BNS19}.
(\Name
provides additional privacy benefits that cannot be obtained via randomized response, such as the ability to
 securely compute on the secret-shared histogram.)

An additional limitation is that \name requires two non-colluding servers
and it does not efficiently scale to the setting of $k$
servers, tolerating $k-1$ malicious servers.
Overcoming this limitation would either require constructing
better multi-party distributed point functions~\cite{BGI16-FSS}
or using a completely different approach.

\paragraph{Contributions.}
The main contributions of this work are:
\begin{enumerate}
  \item a malicious-secure protocol for 
        private heavy hitters in the two-server setting, 
  \item a malicious-secure protocol for 
        private subset histograms in the two-server setting,
  \item the definition and construction of 
        incremental and extractable distributed point functions,
  \item a new malicious-secure protocol for
        checking that a set of parties hold shares
        of a vector of weight at most one, and
  \item implementation and evaluation of these ideas in the \name system.
\end{enumerate}

 \section{Problem statement}

\Name works in a setting in which there are 
two data-collection \emph{servers}.
\Name provides privacy as long as at least one
of these two servers executes the protocol faithfully.
(The other server may maliciously deviate from the protocol.)
There is some number $C$ of participating \emph{clients}.
Each client $i$, for $i \in \{1, \dots, C\}$, 
holds a private input string $\alpha_i \in \zon$.
The goal of the system is to allow the servers to compute
some useful aggregate statistic over the private client-held
strings $(\alpha_1, \dots, \alpha_C)$, while leaking as little 
as possible to the servers
about any individual client's string.

\itpara{Notation.}
Throughout the paper we use $\F$ to denote a prime field and $\mathbb G$ a finite Abelian group,
we use $[n]$ to denote the set of integers $\{1,\ldots,n\}$, 
and $\N$ to denote the natural numbers.
We let $\boldone\{P\}$ be the function that returns $1$ 
when the predicate $P$ is true, and returns $0$ otherwise.
We denote assignments as $x \gets 4$ and, for a finite
set $S$, the notation $x \getsr S$ indicates a uniform random draw from $S$.
For strings $a$ and $b$, $a\|b$ denotes their concatenation.

\subsection{Private-aggregation tasks}\label{sec:prob:tasks}
In this setting, there are two tasks we consider.

\Task{Subset histogram.}\label{task:subset}
In this task, the servers hold a set $S \subseteq \zon$ of strings.
For each string $\sigma \in S$, the servers want to learn the number
of clients who hold the string $\sigma$.
In some of our applications, both the clients and servers know the set $S$
(i.e., the set is public).
In other applications, the servers choose the set $S$ and may keep it
secret. 

As a concrete application, a web-browser vendor may want to use
subset histograms to privately measure the incidence of \emph{homepage hijacking}~\cite{hijack}.
A user's homepage has been ``hijacked'' if malware changes
the user's homepage browser setting without her consent.
In this application, the browser vendor has a set $S$ 
of URLs it suspects are benefiting from homepage hijacking.
The vendor wants to know, for each URL $u \in S$,
how many clients have URL $u$ as their homepage.
For this application, it is important that the
browser vendor hide the set $S$ of suspect
websites from the clients---both to avoid legal
liability and to prevent these sites from taking
evasive action.

In this application then,
each client $i$'s string $\alpha_i$ would be a representation of her
homepage URL.
The servers' set $S = \{\sigma_1, \sigma_2, \dots \}$ would be
the set of suspect URLs.
And then the output of the task would tell the browser vendor how 
many clients use each of these suspect URLs as a homepage, without
revealing to the servers which client has which homepage.

\Task{Heavy hitters.}\label{task:heavy}
In this task, the servers want to identify which strings 
are ``popular'' among the clients.
More precisely, for an integer $t \in \N$,
we say that a string $\sigma$ is a \emph{$t$-heavy hitter}
if $\sigma$ appears in the list $(\alpha_1, \dots, \alpha_C)$ more than $t$ times.
The $t$-heavy hitters task is for the servers to find all such strings. Note that, unlike the previous subset histogram task, here there is no \emph{a priori} set of candidate heavy hitters.

As an illustrative application, consider a web browser vendor
who wants to learn which URLs most crash the browser for more than
1000 clients.
Each client $i$'s string $\alpha_i$ is a representation of the last URL
its browser loaded before crashing.
The $t$-heavy hitters in the list $(\alpha_1, \dots, \alpha_C)$, for $t=1000$, reveal
to the servers which URLs crashed the browser for more than $1000$ clients.
The servers learn nothing about which client visited which URL,
nor do they learn anything about URLs that caused browser crashes for
fewer than 1000 clients.

\subsection{Communication pattern}
\label{sec:comm}

While we primarily focus on the two tasks mentioned above---subset histogram and heavy hitters---the protocols we design can be described more generally as protocols for
privately computing an aggregate statistic
$\textit{agg} = f(\alpha_1, \dots, \alpha_C)$
over the data $\alpha_1, \ldots, \alpha_C \in \zon$ of the
$C$ clients, where the function $f$ is known to the servers but
possibly not to the clients.

Because we do not allow communication between clients,
and minimal communication between the clients and the servers,
the communication pattern for the private aggregation protocol 
should be as follows:
\begin{itemize}
\item \emph{Setup:} In an optional setup phase, the servers generate 
public parameters, which they send to all clients. 
\item \emph{Upload:} The clients proceed in an arbitrary order, 
where each participating client sends a single message to Server~0
and a single message to Server~1. Alternatively, the client can send
a single message to Server~0 that includes an encryption of its second message,
which Server~0 then routes to Server~1. 
We allow no other interaction with or between the clients. 

\item \emph{Aggregate:} Servers~0 and~1 execute a protocol among themselves,
and output the resulting aggregate statistic
$\textit{agg}$.  This step may involve
multiple rounds of server-to-server
interaction.
\end{itemize}
All of the protocols that we consider in this paper and 
implement in \name obey the above communication pattern.

\subsection{Security properties}
\label{sec:secprop}

\Name is designed to provide the following security guarantees.
In \abbr{the full version of this work~\cite{full}}{\cref{app:secdefs}}, 
we provide formal security definitions.

\paragraph{Completeness:} 
If all clients and all servers honestly follow the protocol,
then the servers correctly learn $\textit{agg} = f(\alpha_1, \dots, \alpha_C)$.

\paragraph{Robustness to malicious clients:}
Informally, a malicious client cannot bias the computed aggregate statistic
$\textit{agg}$ beyond
its ability to choose its input $\alpha \in \zon$ arbitrarily.
The same should hold for a coalition of malicious clients working together, where each can cast at most  
a single vote. Whether a malicious client's vote is counted or not may depend on the set $S$ (for subset histogram) or on other client inputs (for heavy hitters).

\paragraph{{Privacy against a malicious server:}}
Informally, if one of the servers is malicious, and the other is honest,
the malicious server should learn nothing about the clients' data beyond the 
aggregate statistic $\textit{agg}$. Furthermore, even if a malicious adversary corrupts both a server and a subset of the clients, 
the adversary should learn no more than it could have learned by choosing the inputs of malicious clients and observing the output $\textit{agg}$. 

\Name's private subset-histogram protocol in Section~\ref{sec:mal} indeed meets this ideal goal, revealing to the adversary only the subset histogram of the participating honest clients. A malicious server can choose to ``disqualify'' honest clients independently of their input, so that their input does not count towards the output. (As a simple example, the server could pretend to not receive any message from a certain client.) The differentially private mechanism in Section~\ref{sec:dp} protects honest clients from being singled out via this attack. Alternatively, if too many clients are disqualified, the honest server can abort the computation.

\Name's most efficient heavy-hitters protocols in Section~\ref{sec:tree}
reveal to a malicious adversary, who corrupts one server and a subset of the clients, 
a small amount of information 
about the honest client data beyond the list of $t$-heavy hitters.  
We capture this using a \emph{leakage function} 
$L:(\zon)^C \to \zo^\ell$
that describes the extra information the adversary obtains.
A malicious server should learn nothing about the client data beyond the 
$\textit{agg}$ and $L(\alpha_1, \dots, \alpha_C)$. 
While we defer the full specification of the leakage function $L$ to
\abbr{the full version of this work~\cite{full}}{\cref{app:secdefs}}, we note here two important
features of this function: first, $L$ is {\em
symmetric} in the sense that it only depends
on the {\em multiset} of strings that
the non-disqualified honest clients hold. In
particular, the leakage reveals no association
between clients and strings in this multiset.
Second, the output length of $L$ is comparable
to that of $\textit{agg}$, and only scales
logarithmically with the number of clients $C$
when $\tau=t/C$ is fixed. Thus, \name
leaks typically much less than
a shuffling-based approach that reveals the
entire multiset. In particular, it does not often expose 
rare inputs, which are often the most sensitive.

\begin{remark}[Non threat: Correctness against malicious servers]
If one of the servers maliciously deviates from the protocol,
we do not guarantee that the other (honest) server will recover
the correct value of the aggregate statistic.
Prior private-aggregation systems offer a similarly relaxed
correctness guarantee~\cite{CSS11,Adnostic,DFKZ13,melis2016,Sepia,ARFCR10,Prio}.
In practice, \name will typically run between two organizations 
that gain no advantage by corrupting the system's output.
(In contrast, the organizations do potentially stand to benefit by learning 
the client's private data.)
So, protecting correctness is less
crucial in our setting than protecting client privacy.
Protecting correctness in the presence of malicious servers 
would be a useful extension that we leave for future work.
\end{remark}

\subsection{Alternative approaches}

We discuss a few alternative ways to solve these problems.

\paragraph{Mix-net.}
If the servers want to compute the multiset of \emph{all} client-held strings (i.e., the threshold $t=1$),
the participants can just use a two-server mix-net~\cite{C81}.
That is, each client onion encrypts her string to the two servers, who each shuffle and
decrypt the batch of strings.
Using verifiable shuffles~\cite{N01} prevents misbehavior by the servers.
In the special case of $t=1$ and with $C$ clients, this alternative has computation cost $O(C)$
(hiding polynomial factors in the security parameter), while \name would have cost $O(C^2)$.
However, without additional rounds of interaction between the clients and servers,
the mix-net-based approach does not
generalize to searching for $t$-heavy hitters with
$t > 1$, where all non $t$-heavy hitters remain hidden.
\Name does.

\paragraph{Generic MPC + ORAM.}
Another alternative solution uses general-purpose malicious-secure two-party computation 
for RAM programs~\cite{KY18,GKKKMRV12,GLO15,LO13}.
Each client sends each server an additive secret-sharing of its input string.
The servers then run a malicious-secure multiparty computation of a RAM program 
that takes as input $C$ strings (one from each client) and computes the heavy hitters.
This approach could have asymptotically optimal computational complexity $\Otilde(C + t)$, 
for heavy-hitters threshold $t$.
At the same time, multiparty computation of RAM programs---even without malicious security---is
extremely expensive in concrete terms~\cite{DS17}, as it requires implementing an oblivious RAM~\cite{GO96} client in a 
multiparty computation.
There may be more sophisticated ways to, for example, 
efficiently implement a streaming algorithm for heavy hitters~\cite{countmin}
in a multiparty computation.
We expect that such techniques will be substantially more complicated
to implement and will be concretely more expensive.

\paragraph{Counting data structures + secure aggregation.}
The count-min sketch~\cite{countmin} is a data structure used for finding
approximate heavy hitters in the context of streaming algorithms.
Melis et al.~\cite{melis2016} demonstrate that it is possible to use
secure-aggregation techniques to allow each client to anonymously insert
its input string into the data structure.
When the set of candidate heavy hitters is unknown, as in our setting,
it is possible to use a set of $n$ such counting data structures 
(where each client holds an $n$-bit string) to 
recover the heavy hitters.
The drawbacks of this approach are:
(1) 
the concrete complexity is worse than our schemes 
    since each client must send a large data-structure update message to each server (see \cref{sec:eval}),
(2) the additional leakage is substantially larger and more difficult to quantify
    than in our protocol, and
(3) these techniques only give approximate answers, where \name computes
    the heavy hitters exactly.

\paragraph{Local differential privacy.}
A beautiful line of work has considered protocols
for computing heavy hitters in the \emph{local model} 
of differential privacy, often using sophisticated variants of
randomized response~\cite{BS15,qin2016heavy,BNST17,BNS19,ZKMSL20}.
The advantage of these protocols is that they require only
a single data-collection server. In contrast, \name and others
based on multiparty computation require at least two non-colluding servers.
The downside of these protocols is that they leak a non-negligible
amount of information about each client's private string to the server.
As we describe in \cref{sec:secprop}, the leakage in \name
depends only on the multiset of private client strings.
Thus \name gives incomparably stronger privacy guarantees
and, as we discuss in \cref{sec:dp}, can also achieve differential privacy. 
In addition, when configured to provide differential privacy 
\name introduces less noise than those based on local differential privacy.
(Since we have two non-colluding servers, the noise grows essentially as it would in the
central model of differential privacy~\cite{DR14}.)

 \section{Background}

This section summarizes the existing techniques for
private aggregation that we build on in this work.

\label{sec:bg:private}

A long line of work~\cite{DFKZ13,melis2016,elahi2014privex,JJ16,Adnostic,vpriv,Sepia,ARFCR10,JK12} has constructed private-aggregation
schemes in the client/server model in which security holds as long
as the adversary cannot control all servers.
To demonstrate how these techniques work, consider the task
of computing subset histograms (Task~\ref{task:subset} of \cref{sec:prob:tasks}).
Each client $i$ holds a private string $\alpha_i \in \zon$ and
the servers hold a set $S = \{ \sigma_1, \sigma_2, \dots, \sigma_k\}$ of strings.
For each $\sigma \in S$, the servers want to know how many clients hold the string $\sigma$.

\paragraph{Distributed point functions (DPFs).}
We can use \emph{distributed point functions}~\cite{GI14,BGI15,BGI16-FSS} 
to accomplish this task in a privacy-preserving way.
A distributed point function is, at a high level, a 
technique for secret-sharing a vector of $2^n$ elements
in which only a single element is non-zero.
The important property of distributed point functions is 
that each share has only size $O(n)$, whereas a 
na\"ive secret sharing would have share size $2^n$.

More formally, a DPF scheme, parameterized by a finite field~$\F$,
consists of two routines: 
\begin{itemize}
\item $\Gen(\alpha, \beta) \to (k_0, k_1)$.
      Given a string $\alpha \in \zon$ and value $\beta \in \F$, 
      output two DPF keys representing secret shares of a
      dimension-$2^n$ vector that has value $\beta \in \F$ 
      only at the $\alpha$-th position and is zero everywhere else.
\item $\Eval(k, x) \to \F$.
      Given a DPF key $k$ and index $x \in \zon$,
      output the value of the secret-shared vector at the position 
      indexed by the string~$x$.
\end{itemize}
The DPF correctness property states that, for all strings $\alpha \in \zon$
output values $\beta \in \F$,
keys $(k_0,k_1) \gets \Gen(\alpha, \beta)$, and strings $x \in \zon$, it holds that
\[\Eval(k_0,\, x) + \Eval(k_1,\, x) = \begin{cases}
\beta &\text{if $x = \alpha$}\\
0 &\text{otherwise}
\end{cases},\]
where the addition is computed in the finite field $\F$.
Informally, the DPF security property states that
an adversary that learns either $k_0$ or $k_1$
(but not both) learns no information about the
special point $\alpha$ or its value $\beta$.

The latest DPF constructions~\cite{BGI16-FSS}, on
a domain of size $2^n$, have keys of length
roughly $\lambda n$ bits, when instantiated with
a length-doubling PRG that uses $\lambda$-bit
keys.

\paragraph{A simple protocol for private subset histograms.}
Given DPFs, we can solve the subset-histogram problem using the following
simple protocol, which we illustrate in \cref{proto:subset}.
At a high level, each client $i$ uses DPFs to create a secret sharing of a
vector of dimension $2^n$. This vector is zero everywhere except that it has
``$1$'' at the position indexed by client $i$'s input string $\alpha_i \in \zon$.
To learn how many clients hold a particular string $\sigma$, the servers
can compute, for each client $i$, the shares of the $\sigma$-th value in the
$i$th client's secret-shared vector.
By publishing the sum of these shares, the servers learn exactly how many
clients held string $\sigma$.

\begin{figure}
\begin{framed}
\Protocol{Private subset histograms} \label[proto]{proto:subset}
There are two servers and $C$ clients.
Each client $i$, for $i \in [C]$ holds a string $\alpha_i \in \zon$.
The servers hold a set $S \subseteq \zon$ of strings.
For each string $\sigma \in S$, the servers want to learn the
number of clients who hold $\sigma$.
The protocol uses a 
prime field $\F$ with $\abs{\F} > C$.

The protocol is as follows:
\begin{enumerate}
  \item \label{step:submit} Each client $i \in \{1, \dots, C\}$, 
    on input string $\alpha_i \in \zon$ prepares a pair of 
      DPF keys as $(k_{i0}, k_{i1}) \gets \Gen(\alpha_i, 1)$.
      The client sends $k_{i0}$ to server $0$ and $k_{i1}$ to server $1$.
\item For each string $\sigma_j \in S$, each server $b \in \zo$ 
      computes the sum of its DPF keys evaluated at the string $\sigma_j$:
      \[ \val_{jb} \gets \sum_{i=1}^C \Eval(k_{ib}, \sigma_j) \quad \in \F.\]
      Each server $\beta \in \zo$ then publishes the values
      \[ (\val_{1b}, \dots, \val_{\abs{S} b}) \quad \in \F^{\abs{S}}. \]
\item Finally, for each string $\sigma_j \in S$, each server can conclude
  that the number of clients who hold string $\sigma_j$ is
      $\val_{j0} + \val_{j1} \in \F$. 
\end{enumerate}
\end{framed}
\end{figure}

Correctness holds since
\begin{align*}
  \val_{j0} + \val_{j1} &= \sum_{i=1}^C \Eval(k_{i0}, \sigma_j)
    + \sum_{i=1}^C \Eval(k_{i1}, \sigma_j)\\
  &= \sum_{i=1}^C \left ( \Eval(k_{i0}, \sigma_j) + \Eval(k_{1i}, \sigma_j) \right)\\
  &= \sum_{i=1}^C \boldone\{ \alpha_i = \sigma_j \},
\end{align*}
which is exactly the number of clients who hold string $\sigma_j$.

As long as one of the two servers is honest, a
fully malicious adversary controlling the other server
and any number of clients learns nothing about the 
honest clients' inputs, apart from what the subset histogram
itself leaks.

\medskip

In the following sections, we show how to extend this simple
scheme to protect against corruption attacks by malicious
clients (\cref{sec:mal}) and support computing heavy hitters (\cref{sec:tree} and~\ref{sec:inc}).
In \cref{sec:dp}, we demonstrate that it is possible 
to achieve user-level differential privacy with these methods as well. Finally, in
\cref{sec:eval} we provide an experimental evaluation of the efficiency of the heavy-hitters protocol.

\section{Privacy-preserving subset histograms\\ via malicious-secure sketching}
\label{sec:mal}

In this section, we show how to modify the simple
scheme of \cref{sec:bg:private} to protect against
corruption attacks by malicious clients.

In the scheme of \cref{sec:bg:private},
if even \emph{one} of the participating clients is malicious,
it can completely corrupt the histogram that the servers recover.
In particular, in Step~\ref{step:submit} of the protocol above,
a malicious client can send malformed DPF keys to the servers.
A client who mounts this attack can prevent the servers from 
recovering any output (i.e., the servers get only pseudorandom garbage)
or can manipulate the statistics (i.e., the client can arbitrarily influence
the histogram the servers recover).

For example, if the servers are using this private-subset-histogram scheme
to measure the incidence of homepage hijacking (cf. \cref{sec:prob:tasks}),
a single malicious client could manipulate the output histogram to make it
look as if no homepage hijacking was taking place.

\subsection{Prior work: Sketching for malicious clients}
\label{sec:mal:prior}
Prior work~\cite{flpcp,Express,Riposte,BGI16-FSS} has presented techniques to harden
the simple scheme of \cref{sec:bg:private}
against misbehavior by malicious clients.
These approaches use similar methods:
before the servers accept the pair of DPF keys from the client,
the servers check that the DPF keys are ``well formed.''
That is, the two servers check that the DPF keys submitted
by each client expand to shares of a vector that is zero everywhere
and one at a single position.

More specifically, given a pair of client-submitted DPF keys $(k_0, k_1)$,
each server $b \in \zo$ evaluates its DPF key $k_b$ on each 
element of the set $S = \{\sigma_1, \sigma_2, \dots \}$ to produce a vector
\[ \bar{v}_b = \big( \Eval(k_b, \sigma_1),\, \dots,\, \Eval(k_b, \sigma_{\abs{S}}) \big) \quad \in \F^{\abs{S}}.\]
Say that $\bar v = \bar v_0 + \bar v_1 \in \F^{\abs{S}}$ is ``valid''
if it zero everywhere with a one at a single index (and is ``invalid'' otherwise).
The servers then run a ``sketching'' protocol to check that 
$\bar v$ is valid.

The protocol should be:
\begin{itemize}
  \item \textbf{Complete.} If $\bar v_0 + \bar v_1$ is valid, the servers always accept.
  \item \textbf{Sound.} If $\bar v_0 + \bar v_1$ is invalid, the servers reject almost always.
  \item \textbf{Zero knowledge.} A single malicious server ``learns nothing'' by running the protocol, 
              apart from the fact that $\bar v_0 + \bar v_1$ is valid.
              In particular, the malicious server does not learn the location or value of
              the non-zero element.
              We can use a simulation-based definition to formalize this security property.
\end{itemize}

Existing sketching techniques suffer from two shortcomings:
\begin{itemize}
  \item \emph{No protection against malicious servers.}
        Existing sketching protocols for checking that the secret-shared vector $\bar v$
        has weight one either 
        do not protect client privacy against malicious behavior by the servers~\cite{BGI16-FSS}.
        (Techniques that do protect against malicious servers, 
        either have client-to-server communication that grows linearly in the length of the 
        vector being checked, as in Prio~\cite{Prio},
        or require extra rounds of interaction between the servers and client~\cite{Express,flpcp}, or 
        require extra non-colluding servers~\cite{Riposte,Blinder}.)

  \item \emph{Weak protection against malicious clients.}
        A more fundamental---and more subtle---problem in our setting
        is that these sketching methods do not necessarily
        prevent a malicious client from influencing the output more than it should,
        as prior work observes~\cite{BGI16-FSS}.

        As an extreme example, say that the servers' set $S$ consists of a single string $\sigma$
        that is unknown to the clients.
        An honest client will submit a pair of DPF keys that expand to shares of
        a vector that contains a one at a \emph{single} coordinate.
        In contrast, a malicious client can submit a pair of DPF keys that expand
        to shares of a vector that is one at \emph{every} coordinate.
        Even if the servers check that their keys expand to shares of a vector
        of weight one in the singleton set $S$, the servers will not detect this attack.

        In this way, the malicious client can have more influence on the output
        than honest clients do.
\end{itemize}

\subsection{New tool: Malicious-secure sketching}
\label{sec:mal:sketch}
Our first contribution of this section is to give a new lightweight
protocol that allows the servers to check that they are holding
additive shares $\bar v_0$ and $\bar v_1$ of 
a vector $\bar v \in \zo^m \subseteq \F^m$ of weight one 
(i.e., that has a single non-zero entry), where $\F$ is a prime field.
Unlike prior approaches, we protect against malicious misbehavior by either of
the two servers, without needing extra interaction with the client
and without needing extra servers.

Our idea is to modify a 
sketching
protocol of Boyle et al.~\cite{BGI16-FSS} (with security against semi-honest servers) to protect it against
malicious behavior on the part of the servers.
To do so, we have the client encode its vector $\bar v$ using a 
redundant, ``authenticated'' randomized encoding, inspired by techniques
from the literature on malicious-secure multiparty computation~\cite{AMD08,SPDZ}. 
We construct the encoding in such a way that if
either server tampers with the client's vector,
the honest servers will reject the client's vector 
with overwhelming probability. Simultaneously protecting against both malicious clients and a malicious server while minimizing the extra overhead is a delicate balancing act, we discuss below.

\paragraph{Encoding.}
In our scheme, we have the client choose a random value $\kappa \getsr \F$
and then encode its vector $\bar v \in \F^m$ as the pair:
$(\bar v,\, \kappa \bar v) \in \F^m \times \F^m$.
In words: the encoding consists of (a) the vector $\bar v$ and (b) the 
vector $\bar v$ scaled by a random value $\kappa \in \F$.
The client sends an additive share of this 
pair $(\bar v, \kappa \bar v)$ to each of the two servers.
Since $\bar v$ has weight one, both $\bar v$ and $\kappa \bar v$ 
are non-zero only at the same single coordinate.
The client can then represent each share of this 
tuple using a single DPF instance with a longer payload.

The client also provides the servers with some correlated 
randomness, as we discuss below, which the servers use
to run a two-party secure computation.

\paragraph{Sketching.}
The servers receive from the client additive shares of
a tuple $(\bar v, {\bar v}^*)$.
If the client is honest then $\bar v^* = \kappa \bar v$.

As in the protocol from~\cite{BGI16-FSS}, the servers then 
jointly sample a uniform random vector
$\bar r = (r_1, \dots, r_m) \in \F^m$ and compute
$\bar r^* = (r_1^2, \dots, r_m^2) \in \F^m$.
(The servers could generate the random vector $\bar r$ using a 
pseudorandom generator, such as AES in counter mode, 
seeded with a shared secret. Or, for information-theoretic
security when $\abs{\F}$ is large, the servers
could take $\bar r = (r, r^2, r^3, \dots, r^m)$.)

Now, the servers compute the inner product of these sketch vectors
with both the client's data vector $\bar v$ \emph{and} their shares of
the encoded vector ${\bar v}^*$.
That is, for $b \in \zo$, server $b$ computes:
\begin{align*}
  z_b &\gets \langle \bar r, \bar v_b \rangle \in \F;& z^*_b &\gets \langle \bar r^*, \bar v_b \rangle \in \F;
  &z^{**}_b &\gets \langle \bar r,   \bar v_b^* \rangle \in \F.
\end{align*}

\paragraph{Decision.}
Finally, the servers use a constant-size 
secure computation
to check that the original sketch would
have accepted.
Letting 
$z \gets z_0 + z_1$, 
$z^* \gets z^*_0 + z^*_1$, and
$z^{**} \gets z^{**}_0 + z^{**}_1$,
the servers use secure computation to evaluate:
	\begin{equation} \label{eq:verify}
	(z^2 - z^*) + (\kappa \cdot z - z^{**}) \in \F
	\end{equation}
and check that the output is 0. Note that the first term corresponds to the original sketch verification of~\cite{BGI16-FSS}, and the second term corresponds to checking consistency of the sharing $(\bar v,\kappa \bar v)$.

Intuitively, the second, $\kappa \bar v$-computed term will play a protecting role in the servers' verification polynomial: any attempt of a malicious server to launch a conditional failure attack by modifying the sketch $z$ to $(z + \Delta)$ will result in masking the nonzero (possibly sensitive) contribution of the first term by random garbage in the second term, from the corresponding $\kappa \Delta$ term of $\kappa (z + \Delta)$.

We remark that the function (\ref{eq:verify}) on inputs $z,z^*,z^{**}$ as written is not publicly known to the servers, due to the secret client-selected $\kappa$ term. A natural approach is to provide the servers additionally with secret shares of $\kappa$, to be treated as a further input.\footnote{This approach indeed will work, though requires care to address the servers' ability to provide additive offsets to $\kappa$. Our implementation uses a protocol based on this approach, which is slightly less efficient than the one presented here.}
Instead, we provide a direct approach for the client to enable secure computation of (\ref{eq:verify}) via appropriate correlated randomness. 

The idea follows the general approach of Boyle et al.~\cite{BGI19}, extending Beaver's notion of multiplication triples~\cite{B91} to more general functions including polynomial evaluation. Here, the client will provide the servers with additive secret shares of random offsets $a,b,c$, which they will use to publish masked inputs $Z \gets (z+a)$, $Z^* \gets (z^*+b)$, and $Z^{**} \gets (z^{**}+c)$. Then, in addition, the client will provide secret shares of each coefficient in the resulting polynomial that they wish to compute:
	\begin{align*} [(Z&-a)^2 - (Z^*-b)] + [\kappa \cdot (Z - a) - (Z^{**}-c) ] \\
&= Z^2 + Z^* - Z^{**} + Z[-2a + \kappa] + [a^2+b - a\kappa + c].
	\end{align*}
That is, the client will give additive secret shares of $A:= [-2a + \kappa]$ and $B := [a^2+b - a\kappa + c]$. To evaluate, the servers each apply the above polynomial on the publicly known values $Z,Z^*,Z^{**}$, using their share of each coefficient; this results in additive shares of the desired output.

\paragraph{Security.} 
Given an honest client, the client-aided two-party computation protocol provides security against a malicious server, up to additive attacks on the inputs $z,z^*,z^{**}$ and output of the computation. The latter is irrelevant in regard to client privacy (recall we do not address correctness in the face of a malicious server). As mentioned above, any additive attack on the inputs $(z + \Delta),~(z^* + \Delta^*),~(z^{**}+\Delta^{**})$ will result in either random garbage output (if $\Delta \neq 0$) or server-predictable output (if $\Delta = 0$).

At the same time, the protocol preserves security against a malicious client. A malicious client has the ability to send invalid values for $\bar v, \bar v^*$ (supposedly $\kappa \bar v$), $A,B$. However, incorporating these malicious values into the expression evaluated by the servers still results in an analogous polynomial in the servers' secret values $r_1,\dots,r_m$ as in~\cite{BGI16-FSS}, and application of Schwartz-Zippel similarly implies that any invalid choice of $\bar v$ will result in nonzero output evaluation with probability $1-2/|\F|$.

\paragraph{Complexity.} Altogether, the client must provide: DPF shares of $(\bar v,\kappa \bar v)$, and additive shares of $a,b,c,A,B \in \F$. Since the desired values of $a,b,c$ are independent random field elements, these shares can be compressed (also across levels of the tree) using PRG seeds, which amortizes away their required communication. This results in extra (amortized) $3\log|\F|$ bits sent to each server, coming from the increased DPF key size (extra $\F$ element for $\kappa$-multiplied payload) plus shares of 2 field elements~$(A,B)$.

For the sketch verification, the servers must exchange masked input shares of $z,z^*,z^{**}$ in the first round, and then shares of the computed output in a second round. This corresponds to $4\log |\F|$ bits of communication of each server to the other, split across two rounds.

\medskip 
We provide a more complete treatment of the sketching procedure in \abbr{the full version of this work~\cite{full}}{\cref{app:mal-sketch}}.

\subsection{New tool: Extractable DPFs} \label{sec:mal:extract}

As discussed in \cref{sec:mal:prior}, 
there is a second shortcoming to using sketching-based techniques
to protect against malicious clients in our setting.
The problem is that if the servers only sketch the client-provided
DPF keys on the strings in the subset $S$, a cheating client can potentially
gain undue influence by having its DPF keys evaluate to $1$ on many different
strings in $\zon$.
The client will evade detection as long as the client's keys evaluate to $1$
on only a single point in the subset $S$.

We address this second problem by giving a refined analysis of our
DPF construction, which is based on the
state-of-the-art DPF construction of~\cite{BGI16-FSS}.
In that construction, each DPF key has a ``public part''---which is identical
for both keys---and a ``private part''---which differs between the two DPF keys.
\abbr{In the full version of this paper~\cite{full}, we prove}
{We show (\cref{lm:taint})} that using this DPF construction,
when instantiated in the random-oracle model, 
and with a large output space,
it is computationally infeasible for a client 
to find malformed DPF keys that 
(a) have the same public part and
(b) represent the sharing of a vector that is $1$ at more than one position known to the 
client.
Moreover, it is possible to efficiently extract the position of $1$ from the oracle
queries made by a malicious client.
We term this strengthened type of DPF an ``extractable DPF.''

This gives the servers a way to check for client misbehavior: the
servers can just check that their DPF keys have identical public parts
and then conclude that the keys must represent shares of a vector that
contains a ``$1$'' at a single relevant index, at most.

\paragraph{The technical idea.}
Working in the random-oracle model~\cite{BR93}, where the underlying PRG is 
a truly random function, we
show that any cheating strategy by a client in
$\Gen$ is restricted in the following sense. Let $k_0,k_1$ denote the private parts of DPF keys
and $\pp$ the public part.  
With
high probability, a malicious client that
generates DPF keys $(k_0^*,k_1^*,\pp^*)$, and is
limited in the number of calls it makes to the
random oracle, can find at most one string
$x$ such that $\Eval(k_0^*,\pp^*, x)+\Eval(k_1^*,\pp^*, x) = 1$.
In contrast, the client can easily generate
keys and multiple strings $x$ such that
$\Eval(k_0^*,\pp^*, x)+\Eval(k_1^*,\pp^*, x)=0$, as in
a valid key, or $\Eval(k_0^*,\pp^*, x)+\Eval(k_1^*,\pp^*, x)$ is
a random value in the (large) output space. 
However, finding two pairs of keys whose outputs evaluate 
to ``$1$'' in two different known locations is infeasible.
Intuitively, the structure of the
keys enables the client to fully control a non-zero
value at only one location $x$. 

When used in combination with the sketching approach of \cref{sec:mal:sketch},
this fact essentially implies a complete defense against malicious clients. Indeed, 
uniqueness of the ``$1$'' location means that only this specific vote can be counted,
since other nonzero locations will 
either be caught by the sketching or will not be part of $S$ and therefore
not influence the output.

\medskip

Overall, combining the malicious-secure sketching technique
of \cref{sec:mal:sketch} with extractable DPFs 
gives a protocol for private subset histograms that
defends privacy against a malicious server and
correctness against a malicious client.
We note that a similar combination can be useful for other applications of DPF in which the DPF is only evaluated on a strict subset of the input domain. Such applications include private information retrieval by keywords, private distributed storage, and more~\cite{BGI16-FSS}. 

The following definition formalizes this notion of extractable DPF in the random-oracle model. Since we envision other applications, we consider here a general (Abelian) output group $\mathbb G$, rather than a finite field $\F$.
Syntactically, an \emph{extractable DPF} scheme is a DPF scheme $(\Gen,\Eval)$ with the modification 
that the $\Gen$ algorithm has an additional output $\pp$ (public parameters) 
that the $\Eval$ algorithm takes
as an additional input. Our analysis assumes that the input length $n$, group $\G$, and target nonzero payload $\beta^*$ ($\beta^*=1$ by default) are chosen 
independently of the random oracle.

\begin{definition}[Extractable DPF, Simplified]
\label{def-edpf-simple}
We say that a DPF scheme in the random-oracle model is {\em extractable} if there is an efficient extractor $E$,
such that every efficient adversary $A$ wins the following game with negligible probability in the security parameter $\lambda$, 
taken over
the choice of a random oracle~$G$ and the secret random coins of $A$.
\begin{itemize}
  \item $(1^n,\G,\beta^*) \leftarrow A(1^\lambda)$, where $\G$ is an Abelian group of size $|{\mathbb G}|\ge 2^\lambda$ and $\beta^*$ is a nonzero group element.
\item $(k^*_0,k^*_1,\pp^*,x^*) \leftarrow A^G(1^\lambda,1^n,\G,\beta^*)$, where $x^* \in \zon$, and $G$ is a  
random oracle. We assume that $\pp^*$ includes the public values $(1^\lambda,1^n,\G)$. 
\item $x \leftarrow E(k^*_0,k^*_1,\pp^*,\beta^*,T)$, where $x \in \zon$ and $T=\{ q_1,\ldots,q_t \}$ is the transcript of $A$'s $t$ oracle queries. 

\end{itemize}
We say that $A$ wins the game if $x^* \neq x$ and $\Eval^G(k_0^*,\pp^*, x^*) +\Eval^G(k_1^*,\pp^*, x^*) = \beta^*$.
\end{definition}

Note that in the above definition, the goal of the extractor $E$ is to find the only input $x$ known to $A$ on which the output is~$\beta^*$. If $A$ could find two or more such inputs, it could win the game with high probability by picking $x^*$ at random from this list. 
In \abbr{the full version of this work~\cite{full}}{\cref{app:extractx}}, we define a more general notion of
extractability, which applies to incremental DPFs (\cref{sec:inc}) and prove the following claim.
\begin{lemma}[Informal]\label{lem:esimple}
The public-parameter variant of the DPF from~\cite{BGI16-FSS} is an extractable DPF with winning probability bounded by
$\epsilon_A = \left(4t^2+2nt+1 \right)/2^\lambda$, where $n$ and $t$ are the length of $x^*$ output by $A$ and number of oracle calls made by $A$, respectively, and  $\lambda$ is the security parameter. The same holds for the Incremental DPF we construct in \cref{sec:inc}.
\end{lemma}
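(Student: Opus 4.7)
The plan is to reduce extractability to the unpredictability of the random oracle $G$ on unqueried points. I first recall the tree structure of the BGI16-FSS DPF in its public-parameter variant: each key $k_b$ contains a root seed $s_b^{(0)}$ and a control bit, and the public part $\pp$ contains correction words $\mathrm{CW}_1,\ldots,\mathrm{CW}_n$ for the internal levels plus a final output correction $\mathrm{CW}^{(n+1)}$ encoding $\beta^*$. Evaluation on input $x \in \zon$ traverses a length-$n$ path dictated by the bits of $x$: at each level $\ell$ the server expands its current seed via $G$, XORs $\mathrm{CW}_\ell$ if its current control bit is $1$, and selects the child indexed by $x_\ell$. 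The leaf seeds of the two parties are combined with $\mathrm{CW}^{(n+1)}$ to form the output in $\G$.

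Next I define the extractor $E$. Given $(k_0^*,k_1^*,\pp^*,\beta^*,T)$, it simulates both $\Eval^G$ executions using the queries recorded in $T$ as its oracle. Starting from the pair of root seeds, at each level it identifies which of the two children has been ``activated'' by the adversary, where a child is activated if at least one of the two seed expansions that feed into it appears as a query in $T$ and yields states that are not trivially identical between the two parties. If at most one activated subtree exists at every level, the extractor descends along it and outputs the resulting $x \in \zon$; otherwise it outputs a default $x = 0^n$.

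The core claim is that if the adversary wins, then one of two low-probability events must have occurred. Let $E_{\text{coll}}$ be the event that any two of the $O(t)$ seeds encountered during the two simulated evaluations (either as oracle inputs in $T$ or as intermediate seeds along the paths of $x^*$ or $x$) collide; a standard birthday bound yields $\Pr[E_{\text{coll}}] \le 4t^2/2^\lambda$. Let $E_{\text{guess}}$ be the event that $\Eval^G(k_0^*,\pp^*,x^*)+\Eval^G(k_1^*,\pp^*,x^*)=\beta^*$ while some on-path seed along the evaluation of $x^*$ is never queried to $G$ by the adversary; conditioned on the absence of collisions, the corresponding $G$-output is statistically uniform in $\G$ given the adversary's view, so the combined leaf value lands on the specific target $\beta^*$ with probability at most $1/|\G|\le 1/2^\lambda$, and a union bound over the at most $n$ levels at which this ``off-transcript'' branching could first occur for one of the $t$ relevant seeds yields $2nt/2^\lambda$. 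The residual $1/2^\lambda$ accounts for the leaf-level case in which all internal seeds were queried but the final combination still only matches $\beta^*$ by guessing the output correction.

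The main obstacle is pinning down the extractor's notion of ``activated subtree'' purely from the oracle transcript, because a malicious client chooses $\mathrm{CW}_\ell$ and control bits arbitrarily and is not constrained to the honest key structure. I will handle this by showing that, conditional on $\neg E_{\text{coll}}$, the set of activated nodes at each level forms a single path descending from the roots, and that any deviation of $x^*$ from this path forces the evaluation to pass through a seed never queried by the adversary, thereby landing inside $E_{\text{guess}}$. Combining the two bad events via union bound then delivers the claimed probability $\epsilon_A = (4t^2+2nt+1)/2^\lambda$, and the same argument lifts to the incremental DPF of Section~\ref{sec:inc} by applying it independently at every prefix length.
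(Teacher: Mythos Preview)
Your central claim---that conditional on $\neg E_{\text{coll}}$ the activated nodes form a single path---is false, and this is where the argument breaks. A malicious client is free to choose the seed correction word $s_{CW}$ at each level arbitrarily; nothing forces it to equal $s_0^{\Lose}\oplus s_1^{\Lose}$. With a bad $s_{CW}$, \emph{both} children of a node can have distinct seed pairs $(s_0^L,s_1^L\oplus s_{CW})$ and $(s_0^R,s_1^R\oplus s_{CW})$ without any collision among the four seeds, so both children are ``activated'' in your sense. Your extractor then outputs the default $0^n$ while the adversary, having queried the full transcript for both branches, can pick either one and potentially hit $\beta^*$. The collision event $E_{\text{coll}}$ does not rule this out.

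The paper's proof avoids this by defining the extractor differently: it enumerates \emph{all} strings whose full evaluation can be reconstructed from $T$ (up to a polynomial cap enforced by halting on repeated seeds at the same level) and outputs any one whose evaluated value equals $\beta^*$. The adversary now wins only if there exist \emph{two} fully-queried strings $x^*$ and $x$ both evaluating to $\beta^*$. Here the key observation---missing from your outline---is that both evaluations share the same level-$n$ output correction word $W_{CW}\in\pp^*$, so $W_0^*+W_1^*+W_{CW}=\beta^*=W_0+W_1+W_{CW}$ forces the random-oracle-derived differences $W_0^*+W_1^*$ and $W_0+W_1$ to coincide. This is a birthday event over at most $t$ oracle outputs in $\G$, contributing $t^2/|\G|\le t^2/2^\lambda$; combined with a separate $3t^2/2^\lambda$ for genuine seed collisions (which would cause the extractor's enumeration to miss $x^*$), this is where the $4t^2$ term actually comes from. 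Your $E_{\text{guess}}$ analysis for the unqueried-seed case is essentially the paper's Case~1 and is fine in spirit.
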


\section{Private heavy hitters} 
\label{sec:tree}

We now turn to the problem of collecting
\emph{$t$-heavy hitters} in a privacy-preserving way
(Task~\ref{task:heavy} of \cref{sec:prob:tasks}).
As before, there are~$C$ clients and
each client $i$ holds a string $\alpha_i \in \zon$.
Now, for a parameter $t \in \N$, 
the servers want to learn every string that appears
in the list $(\alpha_1, \dots, \alpha_C)$ 
at least $t$ times. 

We first show in \cref{sec:tree:prefix},
following prior work~\cite{CKMS03,countmin,BNST17,ZKMSL20},
that the servers can efficiently find all
$t$-heavy hitters by making what we
call ``prefix-count queries'' to the 
list of client strings $(\alpha_1, \dots, \alpha_C)$.
Next, in \cref{sec:tree:alldpf}, we show how each client $i$ can 
give the servers a secret-shared encoding of its string $\alpha_i$
that enables the servers to very efficiently make prefix-count
queries to the list of client strings $(\alpha_1, \dots, \alpha_C)$.

\begin{figure}
  \centering
  \usetikzlibrary{patterns}
\usetikzlibrary{backgrounds,intersections,calc}

\usetikzlibrary{calc}

\definecolor{few-gray-bright}{HTML}{010202}
\definecolor{few-red-bright}{HTML}{EE2E2F}
\definecolor{few-green-bright}{HTML}{008C48}
\definecolor{few-blue-bright}{HTML}{185AA9}
\definecolor{few-orange-bright}{HTML}{F47D23}
\definecolor{few-purple-bright}{HTML}{662C91}
\definecolor{few-brown-bright}{HTML}{A21D21}
\definecolor{few-pink-bright}{HTML}{B43894}
\definecolor{few-gray}{HTML}{737373}
\definecolor{few-red}{HTML}{F15A60}
\definecolor{few-green}{HTML}{7AC36A}
\definecolor{few-blue}{HTML}{5A9BD4}
\definecolor{few-orange}{HTML}{FAA75B}
\definecolor{few-purple}{HTML}{9E67AB}
\definecolor{few-brown}{HTML}{CE7058}
\definecolor{few-pink}{HTML}{D77FB4}
\definecolor{few-gray-light}{HTML}{CCCCCC}
\definecolor{few-red-light}{HTML}{F2AFAD}
\definecolor{few-green-light}{HTML}{D9E4AA}
\definecolor{few-blue-light}{HTML}{B8D2EC}
\definecolor{few-orange-light}{HTML}{F3D1B0}
\definecolor{few-purple-light}{HTML}{D5B2D4}
\definecolor{few-brown-light}{HTML}{DDB9A9}
\definecolor{few-pink-light}{HTML}{EBC0DA}

\colorlet{cryptcolor}{few-purple-bright}
\colorlet{cryptcolorp}{few-green-bright}

\tikzset{
  onpath/.style={fill=red!20,draw=black,text=black},
  highlight/.style={fill=yellow!30, rounded corners},
  offpath/.style={fill=black!15, text=black!70, draw=black!30
  },
  empty/.style={fill=black!20,draw=none,text=black!70},
  toptop/.style={draw=none,rounded corners,
    minimum width=3ex, minimum height=3ex,
    fill=none
    },
  crypt/.style={draw=cryptcolor,thick,
    minimum width=3ex, minimum height=3.5ex,
    fill=white},
  link/.style={draw,thick,->},
}

\begin{tikzpicture}[scale=0.9, transform shape]
\foreach \side in {L} {

\newcommand{\topstyle}{toptop}

\node[crypt] (n) at (4.5,-1.5) {$w_\epsilon = \textbf{3}$};
\node[crypt] (n0) at (2,-2) {$w_0 = \textbf{1}$};
\node[crypt] (n1) at (7,-2) {$w_1 = \textbf{2}$};

\foreach \d/\v in {1/00,2/01,3/10,4/11} {
  \node[crypt,\if\d2empty\fi] at (2.5*\d-1.75,-3) (data\d) {$w_{\v} = \textbf{\if\d 11\fi\if\d20\fi\if\d31\fi\if\d41\fi}$};
}

\newcounter{alphacount}
\newcommand{\drawalpha}{\node[highlight,
  ] at (0.25+1.25*\d-1.5,-4.5) {\stepcounter{alphacount}$\alpha_{\arabic{alphacount}} = {\texttt{\v}}$};}
\foreach \d/\v/\sty/\weight in {1/000/empty/0,2/001//1,3/010/empty/0,4/011/empty/0,5/100/empty/0,6/101//1,7/110/empty/0,8/111//1}
  {

  \if\d 2\drawalpha{}\fi
  \if\d 6\drawalpha{}\fi
  \if\d 8\drawalpha{}\fi

  \node[crypt,\sty] at (0.25+1.25*\d-1.5,-4)
  (leaf\d) 
  {\tiny $w_{\v} = \textbf{\weight}$};

}
\foreach \d in {0,1} {
  \draw[link] (n) -- (n\d);
}
\draw[link] (n0) -- (data1);
\draw[link] (n0) -- (data2);
\draw[link] (n1) -- (data3);
\draw[link] (n1) -- (data4);

\draw[link] (data1)-- (leaf1);
\draw[link] (data1)-- (leaf2);
\draw[link] (data2)-- (leaf3);
\draw[link] (data2)-- (leaf4);
\draw[link] (data3)-- (leaf5);
\draw[link] (data3)-- (leaf6);
\draw[link] (data4)-- (leaf7);
\draw[link] (data4)-- (leaf8);

}

\end{tikzpicture}
   \vspace{-12pt}
  \caption{An example prefix tree on strings $(\alpha_1, \alpha_2, \alpha_3)$ of length $n=3$.
          The weight $w_p$ of a prefix $p \in \zo^*$ is the number of strings in the tree
          that have $p$ as a prefix.}
  \label{fig:search}
\end{figure}

The resulting protocol is lightweight: the client sends
roughly $n$ PRG keys to each server.
When configured to search for $t$-heavy hitters for $t=\tau C$, the protocol
requires server-to-server communication $O(\lambda nC/\tau)$ and server-to-server
computation dominated by $O(nC/\tau)$ PRG operations. The protocol requires $O(n)$ rounds
of communication.

\subsection{Heavy hitters via prefix-count queries}
\label{sec:tree:prefix}

As a first step to understand our approach, imagine that,
for any string $p \in \zo^*$, 
the servers can make queries of the form:
\begin{quote}
How many of the clients' input strings
$\alpha_1, \dots, \alpha_C \in \zon$ start with the prefix $p$?
\end{quote}
We call these ``prefix-count queries.''
For example, suppose there are three clients with strings 
$(\alpha_1, \alpha_2, \alpha_3) = (\texttt{001}, \texttt{101}, \texttt{111})$.
The answer to the query ``$p=\epsilon$'' (where $\epsilon$ is the empty string) 
would be ``$3$,''
the answer to the query ``$p=\texttt{1}$'' would be ``$2$,''
the answer to the query ``$p=\texttt{10}$'' would be ``$1$,''
the answer to the query ``$p = \texttt{101}$'' would be ``$1$,'' and 
the answer to the query ``$p = \texttt{01}$'' would be ``$0$.''

We first show that if the servers can 
get the answers to such queries, then
they can use a simple algorithm to
efficiently enumerate all $t$-heavy hitters
among the list of all clients' input strings.
This is a classic observation from the literature
on streaming algorithms for heavy hitters~\cite{CKMS03,countmin},
which also appears in recent work on heavy hitters in the local 
model of differential privacy~\cite{BNST17} and
in the context of federated learning~\cite{ZKMSL20}.

This algorithm corresponds to a breadth-first-search of the
prefix tree corresponding to the set of strings (\cref{fig:search}),
in which the search algorithm prunes nodes of weight less than $t$.
To give some intuition for how the algorithm works:
let us say that a prefix string $p \in \zo^*$ is a ``heavy'' if at 
least $t$ strings in $\alpha_1, \dots, \alpha_C \in \zon$ start with $p$.
Then we have the following observations:
\begin{itemize}
  \item The empty string $\epsilon$ is always heavy.
  \item If a string $p$ is not heavy, then 
        $p \| \texttt{0}$ and $p \| \texttt{1}$ are not heavy.
  \item If a string $p$ is heavy and $p$ is $n$ characters
        long (i.e., $\abs{p} = n$), then $p$ is a $t$-heavy hitter. 
\end{itemize}
These three observations immediately give rise to \cref{proto:prefix}.
For each prefix length $\ell \in \{0, \dots, n\}$, we construct
the set $\heavy_\ell$ of heavy strings of length $\ell$.
The set $\heavy_0$ consists of the empty string $\epsilon$,
since $\epsilon$ is always heavy 
(assuming, without loss of generality that $t \leq C$).
We construct the set $\heavy_\ell$ by appending \texttt{0}
and \texttt{1} to each element of $\heavy_{\ell-1}$ and checking
whether the resulting string is heavy.
And finally, $\heavy_n$ consists of all of the $t$-heavy hitters.

\begin{figure}
\begin{framed}
{\small
\Algo{$t$-heavy hitters from prefix-count queries}\label[algo]{proto:prefix}
The algorithm is parameterized by a string length $n \in \zo$ and 
a threshold $t \in \N$.

\paragraph{Input:} The algorithm has no explicit input,
  but it has access to a ``prefix-count'' 
  oracle $\calO_{\alpha_1, \dots, \alpha_C}$.
  For any string $p \in \zo^*$, the oracle $\calO_{\alpha_1, \dots, \alpha_C}(p)$
  returns the number of strings in 
  $(\alpha_1, \dots, \alpha_C)$ that begin with prefix $p$.

\paragraph{Output:} The set of all $t$-heavy hitters in $(\alpha_1, \dots, \alpha_C)$.

\paragraph{Algorithm.}
\begin{itemize}[leftmargin=10pt]
  \item Let $\heavy_0 \gets \{\epsilon\}$
        (a set containing the empty string).
  \item Let $\weight_\epsilon \gets C$.
  \item For each prefix length $\ell = 1, \dots, n$:
    \begin{itemize}[leftmargin=10pt]
          \item Let $\heavy_\ell \gets \emptyset$.
          \item For each prefix $p \in \heavy_{\ell-1}$:
                \begin{itemize}[leftmargin=10pt]
                  \item $\weight_{p\|\texttt{0}} \gets \calO_{\alpha_1, \dots, \alpha_C}(p \| \texttt{0})$, and 
                  \item $\weight_{p\|\texttt{1}} \gets \weight_p - \weight_{p\|\texttt{0}} \in \Z$.
                \end{itemize}
            Then:
            \begin{itemize}[leftmargin=10pt]
              \item If $\weight_{p \| \texttt{0}} \geq t$,
                    add $p \| \texttt{0}$ to $\heavy_\ell$.
              \item If $\weight_{p \| \texttt{1}} \geq t$,
                    add $p \| \texttt{1}$ to $\heavy_\ell$.
          \end{itemize}
        \end{itemize}
  \item Return $\heavy_n$. 
\end{itemize}

}
\end{framed}

\end{figure}

\itpara{Efficiency.}
The clients have $C$ strings total.
Then, for for any string length $\ell \in \{0,\dots, n \}$, there
are at most $C/t$ heavy strings of length $\ell$.
At each level $\ell$, the algorithm of \cref{proto:prefix} makes
at most one oracle query for each heavy string.
The algorithm thus makes at most $n \cdot C/t$ 
prefix-count-oracle queries total.
If we are looking strings that more than
a constant fraction of all clients hold (e.g., $t = 0.001C$),
then the number of queries will be independent of the number of clients.

\itpara{Security and leakage.}
While searching for the heavy hitters, the servers
will learn more information than just the $t$-heavy hitters themselves.
In particular, the servers will learn:
\begin{enumerate}
  \item[(a)] the set of all heavy strings and
  \item[(b)] for every heavy string $p$, the number of strings
             in $(\alpha_1, \dots, \alpha_n)$ that begin with $p$.
\end{enumerate}
As we discuss in \cref{sec:dp}, it is possible to apply ideas
from differential privacy to limit the damage that either type of 
the leakage can cause.

\subsection{Implementing private prefix-count queries\\ via incremental DPFs}
\label{sec:tree:alldpf}

Given the techniques of \cref{sec:tree:prefix}, we now just need to 
explain how the servers can compute the answers to prefix-count queries
over the set of client-held strings without learning anything else
about the clients' input strings.

We do this using \emph{incremental distributed point functions},
a new cryptographic primitives that builds on standard
distributed point functions (DPFs, introduced in \cref{sec:bg:private}).
Using standard DPFs for our application would also work, but would be more expensive,
both asymptotically and concretely.
If each client holds an $n$-bit string, with
plain DPFs, the client computation and
communication costs would grow as $n^2$.
With incremental DPFs, this cost falls to linear in $n$.
For our applications, $n\approx 256$, so this factor-of-$n$
performance improvement is substantial.
We get similar performance improvements on the server side.

We first define incremental DPFs, then use them to solve the heavy-hitters
problem, and finally explain how to construct them.

\begin{figure}
  \centering
  \usetikzlibrary{patterns}
\usetikzlibrary{backgrounds,intersections,calc}

\usetikzlibrary{calc}

\newcommand{\lrc}[3]{\if\side L#1\fi \if\side C#2\fi \if\side R#3\fi }

\definecolor{few-gray-bright}{HTML}{010202}
\definecolor{few-red-bright}{HTML}{EE2E2F}
\definecolor{few-green-bright}{HTML}{008C48}
\definecolor{few-blue-bright}{HTML}{185AA9}
\definecolor{few-orange-bright}{HTML}{F47D23}
\definecolor{few-purple-bright}{HTML}{662C91}
\definecolor{few-brown-bright}{HTML}{A21D21}
\definecolor{few-pink-bright}{HTML}{B43894}
\definecolor{few-gray}{HTML}{737373}
\definecolor{few-red}{HTML}{F15A60}
\definecolor{few-green}{HTML}{7AC36A}
\definecolor{few-blue}{HTML}{5A9BD4}
\definecolor{few-orange}{HTML}{FAA75B}
\definecolor{few-purple}{HTML}{9E67AB}
\definecolor{few-brown}{HTML}{CE7058}
\definecolor{few-pink}{HTML}{D77FB4}
\definecolor{few-gray-light}{HTML}{CCCCCC}
\definecolor{few-red-light}{HTML}{F2AFAD}
\definecolor{few-green-light}{HTML}{D9E4AA}
\definecolor{few-blue-light}{HTML}{B8D2EC}
\definecolor{few-orange-light}{HTML}{F3D1B0}
\definecolor{few-purple-light}{HTML}{D5B2D4}
\definecolor{few-brown-light}{HTML}{DDB9A9}
\definecolor{few-pink-light}{HTML}{EBC0DA}

\colorlet{cryptcolor}{few-green-bright}
\colorlet{cryptcolorp}{few-green-bright}

\tikzset{
  onpath/.style={fill=red!20,draw=black,text=black},
  offpath/.style={fill=black!15, text=black!70, draw=black!30
  },
  toptop/.style={draw=none,rounded corners,
    minimum width=3ex, minimum height=3ex,
    fill=none
    },
  crypt/.style={draw=cryptcolor,thick,rounded corners,
    minimum width=3ex, minimum height=3.5ex,
    fill=white},
  link/.style={draw,thick,->},
  key/.style={at=(#1.south east),
    draw=cryptcolor,fill=cryptcolor,text=white,
    anchor=center, rounded corners=0.5ex, node font=\tiny,
    inner sep=0pt, minimum width=0ex,
    text height=2.25ex, text depth=1.5ex}
}

\newcommand{\peven}[1]{\ifodd #1
0
\else 
0
\fi
}

\begin{tikzpicture}[scale=0.66, transform shape]
\foreach \side in {L,C,R} {
\let\pr\relax
\let\ph\relax

\if\side C
\tikzset{xshift=4.5cm}
\fi

\if\side R
\tikzset{xshift=9cm}
\fi

\draw[rounded corners,fill=yellow!10] (-1.2, -4.5) rectangle (3.2, -.85) {};
\newcommand{\topstyle}{toptop}

\node[\topstyle] (n) at (1,-1) {};
\node[crypt,\lrc{}{}{offpath}] (n0) at (0,-2) {\lrc{$v_0$}{$v'_0$}{$0$}};
\node[crypt,\lrc{}{}{onpath}] (n1) at (2,-2) {\lrc{$v_1$}{$v'_1$}{$\beta_1$}};

\foreach \d/\v in {1/00,2/01,3/10,4/11} {
  \node[crypt,\lrc{}{}{offpath},\if\d 3\lrc{}{}{onpath}\fi] at (\d-1.5,-3) (data\d) {\lrc{$v_{\v}$}{$v'_{\v}$}{
    \if\d 3
  $\beta_2$\else 0\fi}};
}

\lrc{
  \newcommand{\loopover}{1/000,2/001,3/010,4/011,5/100,6/101,7/110,8/111}
}{
  \newcommand{\loopover}{1/000,2/001,3/010,4/011,5/100,6/101,7/110,8/111}
}{
  \newcommand{\loopover}{1/000,2/001,3/010,4/011,5/100,7/110,8/111,6/101}
}
  \foreach \d/\v in \loopover
  {
  \node[crypt,\if \d 6\lrc{}{}{onpath}\fi,\lrc{}{}{\if \d 6\else offpath\fi}] at (0.25+\d/2-1.5,-4 - \peven{\d})
(leaf\d) 
  {\lrc{\tiny $v_{\v}$}{\tiny $v'_{\v}$}{
  \if\d 6 $\beta_3$ \else 0\fi}};}
\foreach \d in {0,1} {
  \draw[link] (n) -- (n\d);
}
\draw[link] (n0) -- (data1);
\draw[link] (n0) -- (data2);
\draw[link] (n1) -- (data3);
\draw[link] (n1) -- (data4);

\draw[link] (data1)-- (leaf1);
\draw[link] (data1)-- (leaf2);
\draw[link] (data2)-- (leaf3);
\draw[link] (data2)-- (leaf4);
\draw[link] (data3)-- (leaf5);
\draw[link] (data3)-- (leaf6);
\draw[link] (data4)-- (leaf7);
\draw[link] (data4)-- (leaf8);

\node at (1,-5) {\lrc{$\Eval(k_0, \cdot)$}{$\Eval(k_1, \cdot)$}{Sum of $\Eval$ outputs}};
}
\node[fill=white] at (3.25,-2.5) {\Huge $+$};
\node[fill=white] at (7.75,-2.5) {\Huge $=$};

\end{tikzpicture}
   \vspace{-12pt}
  \caption{Incremental DPFs give concise secret-sharing of the values on the
  nodes of a tree, such that the tree contains a single non-zero path.
  In this example, the depth $n=3$, the special point $\alpha = \texttt{101}$, 
  the values on the path are $\beta_1 \in \G_1, \beta_2 \in \G_2, \beta_3 \in \G_3$ for some 
  finite groups $\G_1$, $\G_2$, and $\G_3$, and the keys are generated as
  $\Gen(\alpha, \beta_1, \beta_2, \beta_3) \to (k_0, k_1)$.}
  \label{fig:allprefix}
\end{figure}

\paragraph{New tool: Incremental DPF.}
A standard distributed point function gives a way to succinctly secret
share a \emph{vector} of dimension $2^n$ that is non-zero at a single point.
By analogy, we can think of an incremental DPF as a secret-shared representation of 
the values on the nodes of a binary \emph{tree} with $2^n$ leaves,
where there is a single non-zero path in the tree whose nodes have non-zero values
(\cref{fig:allprefix}).

More precisely, an incremental DPF scheme, 
parameterized by finite groups~$\G_{1}, \dots, \G_{n}$, 
consists of two routines: 
\begin{itemize}
\item $\Gen(\alpha, \beta_1, \dots, \beta_n) \to (k_0, k_1)$.
      Given a string $\alpha \in \zon$ 
      and values $\beta_1 \in \G_{1}, \dots, \beta_n \in \G_{n}$, output two keys.

      We can think of the incremental DPF keys as representing secret shares
      of the values on the nodes of a tree with $2^n$ leaves and a single non-zero path.
      Using this view, $\alpha \in \zon$ is the index of the leaf at the end of
      the non-zero path.
      The values $\beta_1, \dots, \beta_n$ specify the values that the nodes along the 
      non-zero path take.
      (For simplicity, we do not assign a value to the root node of the tree.
      This is without loss of generality.)

\item $\Eval(k, x) \to \cup_{\ell=1}^n \G_{\ell}$.
      Given an incremental DPF key $k$ and string $x \in \bigcup_{\ell=1}^n \zo^\ell$,
      output a secret-shared value. 

      If we take the view of incremental DPF keys as shares of the 
      values of the nodes on a binary tree, 
      $\Eval(k, x)$ outputs a secret sharing of the value on the $x$th node
      of the tree, where we associate each node in the tree with a bitstring
      in $\bigcup_{\ell=1}^n \zo^\ell$ in the natural way.
\end{itemize}
The incremental DPF correctness property states that, for 
all strings $\alpha \in \zon$,
output values $\beta_1 \in \G_{1}, \dots, \beta_n \in \G_{n}$,
keys $(k_0,k_1) \gets \Gen(\alpha, \beta)$, and 
values $x \in \bigcup_{\ell=1}^n \zo^\ell$, it holds that
\[\Eval(k_0,\, x) + \Eval(k_1,\, x) = \begin{cases}
  \beta_\ell &\text{if $\abs{x} = \ell$ and}\\[-5pt]
  &\text{$x$ is a prefix of $\alpha$}\\
0 &\text{otherwise}
\end{cases},\]
where $\abs{x} = \ell$ and 
the addition is computed in the finite group $\G_{\ell}$.
Informally, the DPF security property states that
an adversary that learns either $k_0$ or $k_1$
(but not both) learns no information about the
special point $\alpha$ or the values $\beta_1, \dots, \beta_n$.

\medskip

We can use standard DPFs in a black-box way to build incremental DPFs:
we secret share the values at each of the $n$ levels of the tree
using a single pair of DPF keys.
With state-of-the-art DPFs, the resulting construction 
has key size and evaluation time proportional to $n^2$,
if $\alpha \in \zon$.

In contrast, we give a direct construction of incremental DPFs
from pseudorandom generators (PRGs)
that has essentially optimal key size and evaluation time. 
More specifically, each incremental DPF key has bitlength
$O(\lambda \cdot n) + \sum_{\ell=1}^n \log_2 \abs{\G_{\ell}}$,
when instantiated with a length-doubling PRG that uses $\lambda$-bit
keys and achieves $\Omega(\lambda)$-bit security.
We describe our construction in \cref{sec:inc}.

\paragraph{Using incremental DPFs to implement heavy hitters.}
We now explain how to build a system for computing $t$-private
heavy hitters using incremental DPFs (\cref{proto:heavy}).

At a high level, each client $i$ produces a pair of incremental DPF
keys that represent the secret sharing of a prefix tree that is zero
everywhere, but whose nodes have value $1$ on the path down 
to client $i$'s input string $\alpha_i$.

Given incremental DPF keys from all $C$ clients, the two servers
can compute the answers to prefix-count queries by publishing a single message each.
To compute the number of client strings that start with a prefix
$p \in \zo^*$, each server evaluates all of the
clients' incremental DPF keys on the prefix $p$ and
outputs the sum of these evaluations.

Using this technique, the servers can run the protocol of
\cref{proto:prefix} to find all of the $t$-heavy hitters.

\begin{figure}[t]
\begin{framed}
{\small
\Protocol{Private $t$-heavy hitters (semi-honest secure version)}\label[proto]{proto:heavy}
Our full protocol uses sketching to achieve security against
malicious clients (\cref{sec:mal}). We elide the sketching step here
for clarity.
There are two servers and $C$ clients.
Each client $i$, for $i \in [C]$, holds a string $\alpha_i \in \zon$.
The servers want to learn the set of all $t$-heavy hitters in 
$(\alpha_1, \dots, \alpha_C)$.
The incremental DPF works over the additive group of a finite field $\F$ where 
$\abs{\F} > C$.

The protocol is as follows:
\begin{enumerate}
\item Each client $i \in \{1, \dots, C\}$, on input string $\alpha_i \in \zon$,
      sets $\beta_1 = \cdots = \beta_n = 1 \in \F$ and prepares a pair of 
      incremental DPF keys as 
      \[ (k^{(i)}_{0}, k^{(i)}_1) \gets \Gen(\alpha_i, \beta_1, \dots, \beta_n).\]
      The client sends key $k^{(i)}_0$ to Server~$0$ and key $k^{(i)}_1$ to Server~$1$.
      After sending this single message to the servers, Client~$i$ can go offline.
    \item \label{step:oracle} The servers jointly run \cref{proto:prefix}.
      Whenever that algorithm makes a prefix-count 
      oracle query on a prefix string $p \in \zo^*$,
      each server $b\in \zo$ computes and publishes the value
      \[ \val_{p,b} \gets \sum_{i=1}^C \Eval(k^{(i)}_{b}, p) \quad \in \F.\]
      Both servers recover the answer to the prefix-count oracle query as
      \[ \val_{p} \gets \val_{p,0}+ \val_{p,1} \quad \in \F.\]
\item The servers output whatever the algorithm 
      of \cref{proto:prefix} outputs.
\end{enumerate}
}
\end{framed}
\end{figure}

\itpara{Efficiency.}
The client-to-server communication consists of a single
incremental DPF key.
The server-to-server communication requires a number of
field elements proportional to the number of prefix-count oracle
queries that the servers make.
As we argued in \cref{sec:tree:prefix}, this is at most
$n \cdot C/t$.

\itpara{Semi-honest security.}
If all parties (clients and servers) follow the protocol,
then a semi-honest adversary controlling one of the two servers
learns no more about the client strings $(\alpha_1, \dots, \alpha_C)$
that what the servers learn from the heavy-hitters 
algorithm of \cref{proto:prefix}.
\cref{sec:dp} discusses how to use ideas from differential privacy to
ameliorate the effects of this leakage.
In principle, it also would be possible for the servers to use
a constant-sized secure two-party
computation~\cite{Yao} to reduce the leakage to
a single bit per prefix-count oracle query.
Since this approach is substantially more complicated to 
implement, and since our protocol's leakage is already quite modest,
we do not discuss this direction further.

\medskip

In practice, clients and servers may deviate from the protocol,
which we discuss here:

\paragraph{Protection against malicious clients.}
As in \cref{sec:mal}, malicious clients can submit malformed incremental DPF
keys with the goal of corrupting or over-influencing the output of the protocol.
We can protect against malicious clients here using our tools from \cref{sec:mal}.

In particular, the servers will run the protocol of \cref{proto:heavy}, instantiated with the
$t$-heavy-hitters algorithm of \cref{proto:prefix}.
In this protocol, for each prefix length $\ell = 1, \dots, n$, 
the servers assemble a set---call
it $S_\ell$---of candidate heavy prefixes of length $\ell$.
The servers will then evaluate all of the clients'
incremental DPF keys at these points.

If the client is honest, the incremental DPF keys evaluated at the points in 
$S_\ell$ will be shares of a vector that is zero everywhere with a one at 
at most a single position.
Specifically, for prefix length $\ell$, client $i$'s incremental DPF keys should
evaluate to shares of the value ``$1$'' on the $\ell$-bit prefix of client $i$'s
string $\alpha_i$. The keys should evaluate to zero everywhere else.

So now the servers have the same task as in \cref{sec:mal}: the servers
hold secret shares of a client-provided vector and 
the servers want to check that this vector is zero everywhere except that it
is ``$1$'' at at most a single coordinate.
Thus, to prevent misbehavior my malicious clients, at each level $\ell \in [n]$
of the tree, the servers can use our malicious-secure sketching schemes
from \cref{sec:mal} to check that this property holds.
At each level of the tree, for each client, 
the servers perform one round of malicious-secure sketching.

We use the malicious-secure sketching approach of \cref{sec:mal:sketch}, which 
requires the client to encode its data using a redundant randomized encoding.

\paragraph{Full security: Protection against malicious servers.}
Our final task is to analyze the security of the
protocol of \cref{proto:heavy} against actively
malicious behavior by one the two participating
servers.

A malicious server's only strategy to learn extra information
in \cref{proto:heavy} is to manipulate answers to the prefix-count
oracle queries using an ``additive attack.''
For example, in Step~\ref{step:oracle} of the protocol, in processing
the answer to a prefix-oracle query $p$, Server $0$ is supposed
to publish $\val_{p,0} = \sum_{i=1}^C \Eval(k^{(i)}_{0}, p)$.
If the server is malicious, it could instead publish the
value $\val_{p,0} + \Delta$, for some non-zero shift $\Delta \in \F$. 

We capture the power of this attack in our formal security definitions 
\abbr{(the full version of this work~\cite{full})}{(\cref{app:secdefs})}, which quantify the 
information that the adversary can learn from such additive attacks.
Intuitively: the adversary can essentially control which strings
are heavy hitters (and can thus learn how many honest clients hold
strings in a small set), but the adversary can do not much worse than this.
As we discuss in \cref{sec:dp}, it is possible to further limit the 
power of this leakage using differential privacy.

\paragraph{Extension: Longer strings.}
The techniques outlined so far allow for the private computation
of $t$-heavy hitters over $n$-bit strings in which each client 
sends each server an all-prefix DPF key with domain size $n$.
Each key is roughly $\lambda n \log_2 C$ bits in length, where 
$C$ is the number of participating clients and
$\lambda \approx 128$ is 
the size of a PRG seed.

In some applications, the servers might want to compute the most popular
values over relatively long strings.
For example, an operating-system vendor might want to learn the set of
popular software binaries running on clients' machines that touch certain
sensitive system files.
In this application, client $i$'s string $\alpha_i \in \zon$ is an x86 program,
which could be megabytes long. So for this application, $n \approx 2^{20}$.

When $n$ is much bigger than $\lambda$, 
we can use hashing to reduce the client-to-server
communication from $\approx \lambda n \log_2 C$ 
bits down to $\approx \lambda^2 \log_2 C + n$ bits and the round complexity 
from $\approx n$ to $\approx \lambda$.
We describe this extension in \abbr{the full version of this work~\cite{full}}{\cref{app:hashing}}.

\section{Constructing Incremental DPFs}\label{sec:incremental}\label{sec:inc}

A straightforward way to construct an incremental DPF would be to generate $n$ independent distributed point function (DPF) keys, one for each prefix length, and to evaluate $x \in \{0,1\}^\ell$ using the $\ell$-th key. Given the most efficient DPF solution \cite{BGI16-FSS}, this would yield overall 
key size and computation for all-prefix evaluation (in units of PRG invocations) both {\em quadratic} in the input bit length $n$.
In contrast, our goal is to construct a more efficient scheme for all-prefix DPF in which all these measures are linear in $n$. 
We achieve precisely this goal, leveraging the specific structure of the DPF construction of~\cite{BGI16-FSS}.

We give the formal syntax and definitions in \abbr{the full version of this work~\cite{full}}{\cref{app:dpfdef}}.
(See \cref{sec:tree:alldpf} for informal definitions.)
In the remainder of this section, 
we sketch our construction of incremental DPF.

\begin{table}
  \centering
\begin{tabular}{rrrrr} 
  & \multicolumn{2}{c}{\bf Key size} & \multicolumn{2}{c}{\bf AES operations}\\ 
  \cmidrule{2-3}
  \cmidrule{4-5}
          & Any $n$ & $n = 256$ & Any $n$ & $n=256$\\ \midrule
  DPF~\cite{BGI16-FSS} & $\approx \frac{n^2\lambda}{2}$ & 543 KB& $\approx \frac{n^2}{2}$ & 32,640 ops.\\
This work& $\approx n(\lambda+m)$ & 6.2 KB & $\approx 2n$ & 513 ops.\\ \bottomrule
\end{tabular}
  \caption{A comparison of the key size and evaluation time of two alternatives for constructing incremental DPF: using state-of-the-art DPF as a black box and the incremental DPF construction in this paper. 
In all entries of the table the input length is $n$, the PRG seed length is $\lambda=127$, the group size in intermediate levels of the tree is $2^m$, $m=62$, and the group size in  the leaves is $2^{2\lambda}$, which suffices for the extractable DPF feature. For asymptotic expressions we assume $m \le \lambda$. The exact key size in the DPF-based construction is $\frac{n(n+1)(\lambda+2)}{2} + n(\lambda+m)+2\lambda$ and in the direct incremental DPF construction the key size is $n(\lambda+m+2)+4\lambda-m$.}
\end{table}

\paragraph{Construction.} 
We construct an efficient incremental DPF scheme, whose key size and
generation/evaluation computation costs in
particular grow {\em linearly} with the input bit
length~$n$.

In the (standard) DPF construction of~\cite{BGI16-FSS}, the evaluation of a shared point function $f_{\alpha,\beta}(x): \zo^n \to \G_n$ traverses a path defined by the binary representation of $x$. The procedure generates a pseudo-random value for each node of the path and an element of the output group $\G_n$ at the termination of the path. The two matching DPF keys are set up so that the pseudo-random  value generated by the first key is sampled independently of the value generated by the other key, for every prefix of $x$ which is also a prefix of $\alpha$. However, when the paths to $x$ and $\alpha$ diverge, the evaluation procedure programs the two pseudo-random values to be {\em equal}, by using extra information encoded in the keys, which we refer to as Correction Words (CW). The evaluation procedure on two identical pseudo-random values generates identical values along the path to $x$, and the same group value for the output, ensuring that the output is $0$ if $x \neq \alpha$. However, if $x=\alpha$ then the two independent pseudo-random values, which are known at key generation time, can be corrected to share the desired output  $\beta$.  

We extend the DPF construction of Boyle et al.~\cite{BGI16-FSS} to further support prefix outputs with small overhead.
The main observation is that the intermediate pseudo-random values generated at each level of DPF evaluation satisfy the same above-described property necessary for the final output level: namely, also for a {\em prefix} $(x_1,\dots,x_\ell) \neq (\alpha_1,\dots,\alpha_\ell)$ the intermediate evaluation generates identical pseudo-random values and for $(x_1,\dots,x_\ell)=(\alpha_1,\dots,\alpha_\ell)$ it generates independent pseudo-random values. These pseudo-random values cannot be used directly to share desired intermediate outputs, as this would compromise their pseudo-randomness required for security of the remaining DPF scheme (roughly, using them twice as a one-time pad). Instead, we introduce an extra intermediate step at each level $\ell$, which first expands the intermediate pseudo-random value $\tilde s^{(\ell)}$ to two pseudo-random values: a new seed $s^{(\ell)}$ which will take the place of $\tilde s^{(\ell)}$ in the next steps of the DPF construction, and an element of the $\ell$th level output group $\G_\ell$ which will be used to generate shares of the desired $\ell$th output $\beta_\ell \in \G_\ell$. 

Ultimately, the new procedure introduces an extra PRG evaluation and group operation per level $\ell$, as well as an additional element $W^{(\ell)}_{CW}$ of the $\ell$th level group $\G_i$ within the key, to provide the desired pseudo-random to target output correction.

\abbr{
The full pseudocode of our incremental DPF construction appears in the full version of this work~\cite{full}.
It yields the following result:
}{
We proceed with a description of an optimized construction of an Incremental DPF in Figure~\ref{fig:IDPF}.
The generation of the new correction word values $W_{CW}$ is performed in lines~\ref{step:Gen-W1},~\ref{step:Gen-W2} of $\Gen$, and their usage within evaluation is in lines~\ref{step:Eval-W1},~\ref{step:Eval-W2} of $\Eval\Next$. 

The powers of $(-1)$ in $\Gen$ line~\ref{step:Gen-W1} and $\Eval\Next$ line~\ref{step:Eval-W2} are to address arbitrary output group structure $\G_\ell$, replacing xor with addition (of inverses) within the group. Here, party $b=1$ will always output the negation of his computed share, so that once again identical pseudo-random shares will yield shares of the identity 0 within $\G_\ell$; the correction word $W_{CW}$ is negated as necessary depending on whether party $b=0$ or $1$ is the one to incorporate the correction, as indicated by $t_1^{(\ell-1)}$.
}

\begin{proposition}[Incremental DPF]  \label{prop:IDPF}
  The incremental DPF scheme described in \abbr{the full version of this work~\cite{full}}{\cref{fig:IDPF}}
is a secure Incremental DPF with the following complexities for $(\alpha,(\G_1,\beta_1),\dots,(\G_n,\beta_n))$: 
  \begin{itemize}
  \item Key size: $\lambda + (\lambda+2)n + \sum_{j \in [n]} \lceil \log |\G_j| \rceil$ bits.
  
  \item Computation: Let ${\sf cost}(\ell) := 1 + \lceil \log (|\G_\ell |) / \lambda \rceil$. Units given in evaluation of a PRG $G: \zo^\lambda \to \zo^{2\lambda+2}$:
\begin{itemize}
    \item $\Gen$: $2 \sum_{\ell \in [n]} {\sf cost}(\ell)$
\item $\Eval(x)$: $\sum_{\ell \in [\abs{x}]} {\sf cost}(\ell)$
    \end{itemize}
  \end{itemize}
\end{proposition}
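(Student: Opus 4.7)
The plan is to verify the three claims of the proposition in turn: the stated complexity bounds, correctness, and security. The complexity bounds are the easiest: the key-size count is a direct accounting of what $\Gen$ outputs, namely an initial $\lambda$-bit seed plus a per-level correction word consisting of a $\lambda$-bit PRG seed, two control bits, and one element of the $\ell$-th output group $\G_\ell$; summing gives $\lambda + (\lambda+2)n + \sum_{\ell \in [n]} \lceil \log |\G_\ell| \rceil$. The computation bound follows by inspection: at each level $\ell$, one length-doubling PRG call produces the two candidate $(\lambda+1)$-bit child seeds for the next level, and $\lceil \log|\G_\ell|/\lambda\rceil$ additional PRG calls expand one of these seeds into the $\G_\ell$-element used for the prefix-output correction $W^{(\ell)}_{CW}$, matching ${\sf cost}(\ell)$. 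Summing along a single evaluation path gives the bound for $\Eval$, and doubling (since $\Gen$ must traverse both children at every on-path node) gives the bound for $\Gen$.

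For correctness I would proceed by induction on the prefix length $\ell$, maintaining the standard BGI invariant: if the $\ell$-bit prefix of the evaluation point equals the corresponding prefix of $\alpha$, then the two parties' seed/control pairs $(s_b^{(\ell)}, t_b^{(\ell)})$ are pseudo-independent with $t_0^{(\ell)} \oplus t_1^{(\ell)} = 1$; otherwise they are identical with $t_0^{(\ell)} \oplus t_1^{(\ell)} = 0$. At each level, the intermediate value $\tilde s^{(\ell)}$ is expanded into a fresh working seed $s^{(\ell)}$, which feeds into the next BGI step, and a $\G_\ell$-element used for the prefix output. In the off-path case, identical $\tilde s$ values yield identical $\G_\ell$-elements that cancel to $0$ after party~$1$'s negation; in the on-path case the pseudo-independent $\G_\ell$-elements are corrected by $W^{(\ell)}_{CW}$, constructed in line~\ref{step:Gen-W1} of $\Gen$ precisely so the two parties' shares sum (via the group operation, accounting for the $(-1)^{t_1^{(\ell-1)}}$ sign) to $\beta_\ell$. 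Crucially, the fresh working seed $s^{(\ell)}$ is a disjoint part of the expansion, so the invariant carries over unimpaired.

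For security I would follow the hybrid argument of~\cite{BGI16-FSS}, adapted to the additional per-level prefix corrections. Fix a corrupted party $b \in \zo$ and argue that its key is computationally indistinguishable from a key generated with $\alpha$ and $(\beta_1, \dots, \beta_n)$ replaced by arbitrary alternative values. The hybrids walk down the tree level by level, replacing the honest party's on-path seed $s_{1-b}^{(\ell)}$ with a uniform string; each hop is justified by PRG pseudorandomness. Once the on-path seed of the other party at level~$\ell$ is uniform, both the standard correction word $CW^{(\ell)}$ and the prefix-output correction $W^{(\ell)}_{CW}$ become information-theoretic one-time-pad encryptions of their respective targets, and are therefore distributed independently of $\alpha$ and $\beta_\ell$ in the corrupted party's view.

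The main obstacle is ensuring that the new intermediate correction $W^{(\ell)}_{CW}$ does not leak across levels: unlike the terminal output correction of~\cite{BGI16-FSS}, this correction is applied at a node whose subtree continues to be traversed, so the analysis must ensure that reusing material derived from $\tilde s^{(\ell)}$ does not induce any correlation between the $\beta_\ell$-mask and the seed propagated into level $\ell+1$. The resolution, already baked into the construction, is that $\tilde s^{(\ell)}$ is first fed into the length-doubling PRG to produce \emph{disjoint} pseudorandom blocks for $s^{(\ell)}$ and for the $\G_\ell$-element masking $\beta_\ell$; conditioned on $\tilde s^{(\ell)}$ being pseudorandom (which the hybrid argument arranges in the on-path case), these two blocks are jointly pseudorandom, so the mask on $\beta_\ell$ and the seed carried to the next level are independent and the hybrid reduction goes through without any new loss.
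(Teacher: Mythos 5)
Your proposal is correct and takes essentially the same route as the paper. The paper's proof only spells out the security argument—a level-by-level hybrid in which each correction word $CW^{(\ell)}$ becomes a one-time pad once the other party's on-path seed is replaced by uniform, with the key observation (which you also identify) that $\Convert_{\G'_\ell}$ expands $\tilde s^{(\ell)}$ into two \emph{disjoint} pseudorandom blocks, so the mask on $\beta_\ell$ does not correlate with the seed carried to level $\ell{+}1$; your treatment of complexity and the BGI-style correctness invariant is additional bookkeeping the paper leaves implicit.
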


We prove \cref{prop:IDPF} in \abbr{the full version of this work~\cite{full}}{\cref{app:idpf}}.

In \abbr{the full version of this work~\cite{full}}{\cref{app:opts}}, we describe a number of low-level optimizations
that we have implemented to make our incremental DPF construction more
efficient, especially when using AES hardware instructions to implement
the PRG.

 \section{Providing differential privacy}
\label{sec:dp}

\abbr{}{
In many settings the set of heavy hitters itself
can leak sensitive information
about users' private inputs.
For example, say that the servers run our
heavy-hitters protocol once on a set of
client-provided URLs $(x_1, \dots, x_C)$ and the
protocol output indicates that \texttt{nytimes.com} is a heavy hitter.
Then, one of the clients goes offline.
The servers run our heavy-hitters protocol a second time
on the smaller set of $C-1$ URLs $(x_1, \dots, x_{C-1})$
and the protocol output indicates that \texttt{nytimes.com} is
\emph{not} a heavy hitter.
In this case, anyone who observes the set of
clients who participated in each protocol run
along with the public output of the protocol can
infer with certainty that the URL $x_C$ of client
$C$ was \texttt{nytimes.com}.
So even though ``nothing more'' than the set of
heavy hitters leaks, 
this information itself can be sensitive.}

To bound the amount of information that an adversary can infer
from the system's output, we can ensure that the system's 
output satisfies $\epsilon$-\emph{differential privacy}~\cite{DMNS06,DR14}.
This is possible with a simple tweak to
our heavy-hitters protocol (\cref{proto:heavy}), which we
describe in \abbr{the full version of this work~\cite{full}}{\cref{app:dpx}}.

 \section{Implementation and evaluation}
\label{sec:eval}

\begin{figure}
  \centering
  \includegraphics{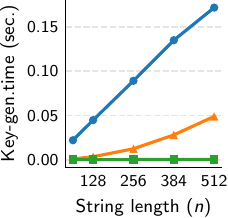}~~~~\includegraphics{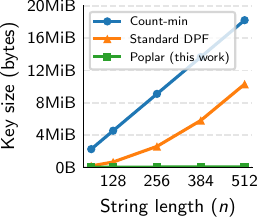}
  \caption{Client-side costs in time (left) and communication (right) of \name,
          compared with a baseline scheme using standard DPFs and a prior approach
          using count-min sketches~\cite{melis2016}.
          Costs for \name are relatively small and grow linearly in the 
          length $n$ of each client's string.}
  \label{fig:clienttime}
\end{figure}

The \name system is an end-to-end implementation of the 
private heavy-hitters scheme described in this paper.
Our implementation is roughly 3,500 lines of Rust code
(compiled with 1.46.0-nightly), including tests.
The code is online at \url{https://github.com/henrycg/heavyhitters}.

\Name's sketching scheme uses a 62-bit finite field in the middle of
the ``tree'' (\cref{fig:allprefix}) and 
a 255-bit field at the leaves.
With this configuration, our sketching schemes detect cheating clients,
except with probability $\approx 2^{-62}$, over the servers' random choices,
independent of how much computation a cheating client does.
While we expect this level of security against a cheating client
to be sufficient in practice, by running
the sketching scheme twice---at most doubling the communication and computation---we
can achieve nearly $128$-bit security.
Using the larger field at the leaves ensures that our DPF construction satisfies
the extractability property (\cref{lem:esimple}) against cheating clients that
run in time at most $\approx 2^{128}$.

\itpara{Client costs.}
\cref{fig:clienttime} shows the client costs for 
three different private heavy-hitters schemes. 
Our client experiments run on an Intel i7-1068NG7 CPU at 2.3 GHz.
The first is \name's tree-based
scheme (\cref{sec:tree}), based on our new incremental DPFs (\cref{sec:incremental}). 
The second uses our tree-based scheme, but with standard DPFs~\cite{BGI16-FSS}.
The third uses private aggregation of count-min sketches~\cite{countmin,melis2016}
to compute approximate heavy hitters.
For the count-min sketches, we set the approximation error $\epsilon=1/128$ and
failure probability $\delta=2^{-60}$. (To reduce communication in this third scheme,
we use DPFs here as well.)

Our incremental DPF keys have size linear in the length of the clients' strings,
with a small constant.
In contrast, using standard DPFs requires one linear-sized key for each layer of
the prefix tree (\cref{sec:tree}), which yields a quadratic cost.
The count-min-sketch based private aggregation scheme also has a linear client-side cost,
but the large size of each sketch makes the constant substantially worse.

\itpara{Server communication.}
\cref{fig:comm} shows the total communication cost
per server per client of running \name's end-to-end
heavy-hitters protocol.
In this experiment, clients sample their strings from a
Zipf distribution with
parameter 1.03 and support 10,000.
This type of ``power-law'' distribution arises naturally in
network settings~\cite{KL01} and we choose the
parameter conservatively (i.e., the distribution
is closer to uniform than we would expect in nature), 
which likely gives an underestimate of \name's performance.
In this experiment, servers search for
strings that more than $0.1\%$ of clients hold.
In \name, the total communication per client is 
tens of kilobytes.
\cref{fig:comm} also estimates the dollar cost of computing
private heavy hitters using the baseline scheme (based on standard DPFs)
and \name, as the number of clients varies.
\Name is roughly two orders of magnitude less expensive.
\begin{figure}
  \centering
  \includegraphics{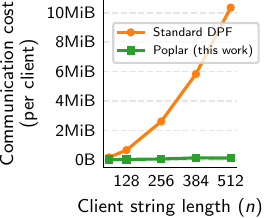}~~~~\includegraphics{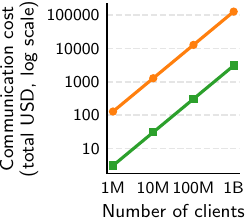}
  \caption{Total server communication cost (send + receive) per client of \name.
          At left: we simulate the server-side communication cost \name
          and compare to a scheme based on standard DPFs.
          At right: we compare the US-dollar cost of the schemes
          used with 256-bit strings searching for the top-900 heavy hitters,
          using Amazon EC2's current (Dec.\@ 2020) data-transfer price of USD 0.05/GB.}
  \label{fig:comm}
\end{figure}

\itpara{End-to-end performance.}
Finally, we ran an end-to-end performance test of \name
over the Internet.
We use one c4.8xlarge server (32 virtual cores) 
in Amazon's us-east-1 region (N.~Virginia) 
and one in the us-west-1 region (N.~California).
The round-trip latency between the two data centers was 61.8ms.
We measure the running time from the moment after the servers
collect the last incremental DPF keys from the clients until the
servers produce their output.
Each client holds a 256-bit string, which is enough to represent a 
42-character domain name (uncompressed).
\cref{tab:ec2table} shows the results of this experiment.
For 400,000 clients, the total running time is around 53 minutes.

\Name is almost completely parallelizable.
In \cref{fig:simtime}, we give estimates for the protocol-execution time, as
a function of the number of clients and the number of physical machines used
to implement each of the system's two logical servers.
When deployed with 20 machines per logical server, we estimate that
\name could process ten million client requests in just over one hour.

\begin{table}[t]
  \centering
  
\centering
\begin{tabular}{rrrrrr}
  & \multicolumn{3}{c}{\textbf{Running time (sec.)}} & & \\ \cmidrule{2-4}
  \textbf{Clients} & DPF & Sketching & Total & Clients/Sec.\\ \midrule
100k& 107.3 & 704.5 & 828.1  & 120.8 \\
200k& 211.0 & 1,404.1  & 1,633.5 & 122.4 \\
400k& 433.5  & 2,771.4 & 3,226.0 & 124.0 \\ \midrule
\end{tabular}
   \caption{End-to-end cost of \name, when used to collect
  $n = 256$-bit strings. Each client's string is sampled from a Zipf distribution 
  with parameter $1.03$ and support $10,000$.
  We use one c4.8xlarge server (32 virtual cores) to implement each of the two logical servers.
  One server is in Amazon's N.~California data center and the other is in N.~Virginia.
  The servers set the heavy-hitters threshold to collect all strings that more than
  $0.1\%$ of clients hold.}
  \label{tab:ec2table}
\end{table}
\begin{figure}
  \centering
  \begin{minipage}{0.66\columnwidth} 
  \includegraphics{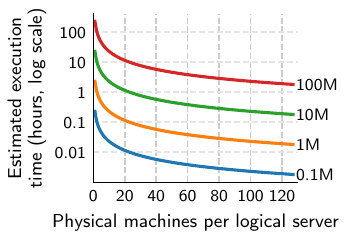}
  \end{minipage}~\begin{minipage}{0.32\columnwidth} 
  \caption{Estimated execution time of the \name system.
    Each line represents a number of clients.
    The workload is fully parallelizable, so we
    neglect sharding costs.
    System parameters are as in \cref{tab:ec2table}.}
  \label{fig:simtime}
  \end{minipage}
\end{figure}

 \section{Conclusions}

We have described \name, a system that allows two non-colluding servers
to compute the most popular strings among a large set of client-held strings while
preserving client privacy.
Along the way, we introduced several lightweight cryptographic tools: a 
protocol for checking that a secret-shared vector is a unit vector, 
an extractable variant of distributed point functions that defends against badly formed keys,
and a generalization of distributed point functions for secret-sharing weights on binary trees.

There are a number of potential extensions to this work. 
For instance, instead of finding heavy hitters, the servers
might like to find heavy \emph{clusters}---strings
that are close to many of the client-held strings,
under some distance metric.
Perhaps each client holds a GPS
coordinate pair and the servers want to learn the
popular neighborhoods.

\paragraph{Acknowledgments.}
We thank Eric Rescorla for suggesting this problem,
Saba Eskandarian for helpful comments, 
Phillipp Schoppmann and Simone Colombo for pointing out typos,
Christopher Patton for technical discussion,
Mayank Rathee for pointing out a bug in \cref{fig:ConvertFn} in 
the proceedings version of this work,
and the anonymous reviewers for their feedback and suggestions.
Dan Boneh was funded by NSF, DARPA, a grant from ONR, and the Simons Foundation.
Elette Boyle was supported by ISF grant 1861/16, AFOSR Award FA9550-17-1-0069, and
ERC Project HSS (852952).
Henry Corrigan-Gibbs was funded in part by NSF (CNS-2054869), Facebook, and Google.
Henry thanks Bryan Ford for generously hosting him at EPFL during the early stages of this project.
Niv Gilboa was supported by ISF grant 2951/20, ERC grant 876110, and a grant by
the BGU Cyber Center.
Yuval Ishai was supported by ERC Project NTSC (742754), ISF grant 2774/20,
NSF-BSF grant 2015782, and BSF grant 2018393. 

\frenchspacing
{\small
\bibliographystyle{plain}
\bibliography{refs}
}
\nonfrenchspacing

\appendix
\abbr{}{\section{Formal security definitions}
\label{app:secdefs}

To formally specify the security properties of our protocols, we use the standard ``real vs.\ ideal'' definition paradigm for secure multiparty computation~\cite{Canetti00,Goldbook}. This involves specifying a precise ideal functionality or leakage function for each type of corruption. We will start with the simpler case of the subset-histogram protocol, and then address the heavy-hitters protocol.

For both protocols, we consider here the ``bare-bones'' version that does not add noise for differential privacy purposes. The differentially private variant, discussed in Section~\ref{sec:dp}, adds a suitable amount of server-generated noise to the functionalities described below. This extra defense mechanism may not be needed in situations where there are good statistical guarantees on the entropy of the inputs contributed by honest clients.

In the following, security refers to computational security with respect to a common security parameter $\lambda$ that is given to all parties. The security of our protocols against malicious clients is proved in the random oracle model. Here we assume that inputs of honest parties are picked independently of the oracle.

\paragraph{Functionalities for subset-histogram protocol.}
\begin{itemize}
\item \emph{Parties:} $C$ client parties, two servers. 
\item \emph{Public parameters:} String length $n$, upper bound on $C$.
\item \emph{Functionality for honest parties:}
\begin{itemize}
\item Receive input $S=\{\sigma_1,\ldots,\sigma_m\}\subseteq \{0,1\}^n$ from servers.
\item For $i \in [C]$, receive input $\alpha_i \in \zon$ from Client $i$. 
\item Deliver to  servers the output \[\textit{agg} = f_S(\alpha_1, \ldots, \alpha_C)=\langle (\sigma_1,\weight_1),\ldots,(\sigma_m,\weight_m)\rangle, \] where $\weight_j=|\{i\,:\,\alpha_i=\sigma_j\}|$.
\end{itemize}
\item \emph{Functionality for malicious clients:} Suppose clients $T\subset [C]$ are controlled by an efficient malicious adversary $A$. The influence of $A$ on the output is captured by the following functionality.
\begin{itemize}
\item Receive input $S=\{\sigma_1,\ldots,\sigma_m\}\subseteq \{0,1\}^n$ from servers.
\item For $i \in [C]\setminus T$, receive $\alpha_i \in \zon$ from Client $i$. 
\item For $i \in T$, receive $\alpha^*_i \in \zon$ and selective vote predicate $V^*_i:\{0,1\}^n\to\{0,1\}$, specified by a Boolean circuit,
from $A$. (In the random-oracle model, $V^*_i$ can invoke the oracle.) Intuitively, if $V^*_i(\sigma)=1$ for some $\sigma\in S$, the vote $\alpha^*_i$ of Client $i$ will not count.
\item Deliver to  servers $\textit{agg} = f_S(\alpha'_1, \ldots, \alpha'_C)$, where: 
\begin{enumerate}
\item $\alpha'_i=\alpha_i$ if $i\not\in T$,
\item $\alpha'_i=\alpha^*_i$ if $i\in T$ and moreover $V^*_i(\sigma)=0$ for all $\sigma\in S$,   
\item $\alpha'_i=\bot$ otherwise.
\end{enumerate}
\end{itemize}
\item \emph{Leakage for malicious server:} Suppose both Server $b$, for $b\in\{0,1\}$, and client set $T\subset [C]$ are controlled by an efficient malicious adversary $A$. The view of $A$ can be simulated given the following leakage function.
\begin{itemize}
\item Receive  $S=\{\sigma_1,\ldots,\sigma_m\}\subseteq \{0,1\}^n$ from honest Server $1-b$ and leak it to $A$.
\item Let $D\leftarrow A(S)$, where $D\subseteq [C]$,  be a subset of disqualified clients. 
\item Let $Q= [C]\setminus (T\cup D)$.
\item For $i \in Q$, receive input $\alpha_i \in \zon$ 
from Client $i$. 
\item Leak $(S,Q,\textit{agg}_Q)$ to $A$, where $\textit{agg}_Q = f_S((\alpha_i)_{i\in Q})$.
\end{itemize}

\end{itemize}

\paragraph{Functionalities for heavy-hitters protocol.}
\begin{itemize}
\item \emph{Parties:} $C$ client parties, two servers. 
\item \emph{Public parameters:} String length $n$, upper bound on $C$.
\item \emph{Functionality for honest parties:}
\begin{itemize}
\item Receive heavy-hitter threshold $t=\tau C$ from servers.
\item For $i \in [C]$, receive input $\alpha_i \in \zon$
from Client $i$. 
\item Deliver to servers the output \[\textit{agg} = f_t(\alpha_1, \ldots, \alpha_C)= \{ \alpha \,:\, |\{ i:\alpha_i=\alpha\}|\ge t \}. \]
\end{itemize}
\item \emph{Functionality for malicious clients:} Suppose clients $T\subset [C]$ are controlled by an efficient malicious adversary $A$. The influence of $A$ on the output is captured by the following functionality.
\begin{itemize}
\item Receive threshold $t$ from servers.
\item For $i \in [C]\setminus T$, receive $\alpha_i \in \zon$ from Client $i$. 
\item For $i \in T$, receive $\alpha^*_i \in \zon$ and selective vote predicate $V^*_i:P_n \to\{0,1,\bot\}$
from $A$, where $P_n$ denotes the set of binary strings of length $\le n$ and $V^*_i$ is specified by a Boolean circuit. (In the random-oracle model, $V^*_i$ can invoke the oracle.) For $p$ such that $|p|=n$, we require that if $V^*_i(p)=1$ then $p=\alpha^*_i$ (for any choice of the oracle). 
Intuitively, $V_i^*(p)$ represents the vote of Client $i$ for prefix $p$, under the restriction of casting at most one vote for an $n$-bit string. (Alternatively, with a slight loss of efficiency, we can realize a variant that restricts a client to one vote for every prefix length.)
\item Run Algorithm~\ref{proto:prefix}, with the following modifications: 
\begin{itemize}
\item Initialize a set $\hat C\leftarrow [C]$ of active clients. 
\item Before iteration $\ell$, $1\le \ell\le n$, remove from $\hat C$ every client~$i$ for which either: (1) there is $p^*\in H_{\ell-1}$ such that $V^*_i(p^*\|0)=\bot$, or (2) there are two distinct $p^*\in H_{\ell-1}$ such that $V^*_i(p^*\|0)=1$.
\item In iteration $\ell$, compute each weight $\weight_{p\|0}$ by $\weight_{p\|0}=\sum_{i\in\hat C}V'_i(p\|0)$, where $V'_i=V^*_i$ if $i\in T$, and if $i\not\in T$ then $V'_i(\beta)$ returns $1$ if $\beta$ is a prefix of $\alpha_i$ and 0 otherwise.
\end{itemize}
\item Deliver to both servers the output $H_n$ of Algorithm~\ref{proto:prefix}.
\end{itemize}
\item \emph{Leakage for malicious server:} Suppose both Server $b$, for $b\in\{0,1\}$, and client set $T\subset [C]$ are controlled by an efficient malicious adversary $A$. The view of $A$ can be simulated given the following leakage function.
\begin{itemize}
\item Receive heavy-hitter threshold $t=\tau C$ from honest Server $1-b$ and leak it to $A$.
\item For $i \in [C]\setminus T$, receive $\alpha_i \in \zon$ from Client $i$. 
\item Run Algorithm~\ref{proto:prefix}, with the following modifications: 
\begin{itemize}
\item Initialize set $\hat C\leftarrow [C]\setminus T$ of active honest clients. 
\item Before iteration $\ell$, $1\le \ell\le n$, allow $A$ to choose a set of clients to remove from $\hat C$. 
\item In iteration $\ell$, compute a tampered version $\weight^*_{p\|0}$ of the weights $\weight_{p\|0}$, defined by \[ \weight^*_{p\|0}=\sum_{i\in\hat C}V_i(p\|0)+\Delta_p,\] where $V_i(\beta)$ outputs 1 if $\beta$ is a prefix of $\alpha_i$ and 0 otherwise, and $\Delta_p$ is an integer chosen by $A$ based on all previous information it learned. Leak the value $\weight^*_{p\|0}$ to $A$.
\item  If the set $H_\ell$ computed in the end of iteration $\ell$ satisfies $|H_\ell|>1/\tau$, abort.
\end{itemize}
\end{itemize}
\end{itemize}

 }
\abbr{}{

\section{Extension: Hashing for longer strings}
\label{app:hashing}

In this section we describe a hashing-based optimization that improves the communication complexity and the round complexity of our heavy hitters protocol when $n\gg \lambda$.  Before describing our solution, we start with a simpler approach that fails to achieve security against malicious clients.

A first idea is to have all clients use a public random hash function $\Hash \colon \zon \to \zo^{2\lambda}$ to map their long $n$-bit inputs into $2\lambda$-bit strings,\footnote{Here and in the following, we choose a $2\lambda$-bit output to ensure that collisions occur with negligible probability. In some settings a shorter output size would suffice, depending on the number of clients $C$ and the tolerable error probability.} and run the heavy hitters protocol on the shorter inputs $\Hash(\alpha_i)$. An obvious problem is that this only reveals the popular hash values $\Hash(\alpha)$ instead of the popular strings $\alpha$. A natural fix is to have each client write its full string $\alpha_i$ at the corresponding IDPF leaf. More precisely, we use the payload group $\G=\F\times \F^{n'}$, where $\F$ is a prime field such that $|\F|>C$ and the strings in $\zon$ are encoded as vectors in $\F^{n'}$. Each client $i$ adds $(1,\alpha_i)$ to the payload of leaf $\Hash(\alpha_i)$. Aggregating the contributions of all clients, and assuming no hash collisions occur, leaf $\Hash(\alpha)$ contains the payload $(v,v\cdot\alpha)$ where $v$ is the number of clients with input $\alpha$. Note that $\alpha$ can be fully recovered from the payload, as required.

The above solution achieves our efficiency and security goals when all clients are semi-honest. (Efficiency results from the fact that the payload size is only an additive term in the IDPF key size, and has no influence on the round complexity of our heavy hitters protocol.) However, even just a single malicious client can easily corrupt the information about a string $\alpha$ by writing a {\em random} payload to leaf $\Hash(\alpha)$. 

To mitigate this attack, we use the following approach. We view each leaf $\Hash(\alpha)$ as the root of a depth-$\lambda$ binary subtree, and let each client $i$ write the (long) string $\alpha_i\in\zon$ to a random leaf of the tree rooted by $\Hash(\alpha_i)$. Since our heavy hitters protocol prevents a malicious client from writing to more than a single leaf, the probability of any malicious client colliding with a string written by an honest client is negligible. By keeping in each node of the extended tree a count of the number of the non-empty leaves in its subtree, the servers can traverse the subtree rooted by each heavy hitter hash value $h=\Hash(\alpha)$ until they find a leaf containing a long string $\alpha$ consistent with $h$. This approach can be enhanced by using an error-correcting code for encoding the inputs, where each client writes a random symbol of the encoding of $\alpha_i$. 

In more detail, our solution proceeds as follows. 
We use a hash function: $\Hash \colon \zon \to \zo^{2\lambda}$,
which we model as a random oracle~\cite{BR93}.

Each client $i$ runs the following steps:
\begin{itemize}
  \item Compute $h_i \gets \Hash(\alpha_i) \in \zo^{2\lambda}$.
  \item Choose a random nonce $\nu_i \getsr \zo^\lambda$ and
        set $\hat \alpha_i \gets (h_i \| \nu_i) \in \zo^{3\lambda}$.
  \item Set $\beta_1 = \dots = \beta_{3\lambda-1} = 1 \in \F$, for a prime field
        $\F$ with $\abs{\F} > C$.
        Set $\beta_{3\lambda} \gets \alpha_i \in \F_{2^n}$.
  \item Prepare a pair of DPF keys:
    \[ (k^{(i)}_{0}, k^{(i)}_{1}) \gets \Gen(\hat \alpha_i, \beta_1, \dots, \beta_{3\lambda})\]
        and send one key to each server, as before.
\end{itemize}

Here, the incremental DPF keys represent secrets
shares of the values of the nodes on
a depth-$(3\lambda+1)$ binary tree.
For client $i$, these node values are all zero except
on the path to leaf $\hat \alpha_i = (\Hash(\alpha_i)\| \nu_i)$, which have
value $1$. 
Finally, the leaf indexed by $\hat \alpha_i \in \zo^{3\lambda}$ contains
the client's full string value $\alpha_i \in \zo^n$, represented
as a field element in $\F_{2^n}$.

Notice that the incremental DPF now operates over strings of length 
$n' = 3\lambda + 1$. Therefore the total key size is:
$\lambda n' + \sum_{\ell=1}^{n'} \log_2 \abs{\F_\ell}$.
We have $\abs{\F_1} = \cdots =\abs{\F_{3\lambda}} \approx C$
and $\abs{\F_{3\lambda+1}} = n$, making the length of each key only 
$\approx 3 \lambda^2 \log_2 C + n$ bits.

The process that the servers use to recover the heavy hitters now changes
slightly: 
\begin{itemize} 
  \item The servers run the $t$-heavy hitters protocol 
          (\cref{proto:heavy}) to find each hash value $h \in \zo^{2\lambda}$ that
        more than $t$ clients submitted.
  \item For each such hash value $h$:
    \begin{itemize}
      \item The servers search for 
        a string $\nu \in \zo^\lambda$ such that at least one client has
        the string $\nu$.
        They can do this using a randomized depth-first search
        variant of \cref{proto:prefix}.
    \item Finally, when then the servers find such a value $\nu$,
        each server publishes the sum of their incremental DPF keys
        evaluated on $(h \| \nu) \in \zo^{3\lambda}$.
        
        If the client who submitted this string is honest, the servers
        will recover a string $\sigma \in \zon$ such that $h = \Hash(\sigma)$.
        Otherwise, the servers retry the randomized depth-first search
        until they find such a string.
        
  \end{itemize}
\end{itemize}

Our extensions for providing differential-privacy (\cref{sec:dp})
are not compatible with this hashing-based technique.
However, they are compatible with a more refined variant
that uses error-correcting codes.

To sketch the idea: We no longer have each client write its
entire $n$-bit string into the leaf of the incremental DPF tree.
Instead, each client encodes its string $\alpha_i \in \zon$ using
an error-correcting code.
Say that the encoded string $E(\alpha_i)$ has $2^k$ symbols,
which we can index by strings $\zo^k$.
Each client picks the index $j \in \zo^k$ of a random symbol
and writes this symbol into the position 
$(\Hash(\alpha_i) \| j) \in \zo^{2\lambda +k}$
in the DPF tree.

 }
\abbr{}{

\section{Cryptographic details}

\subsection{Definition: Incremental DPF}
\label{app:dpfdef}

We seek FSS for the following class of {\em all-prefix point functions}.

\begin{definition}[All-Prefix Point Function] 	\label{def:all-prefix}
We define the class of {\em all-prefix point functions}, each represented by a tuple $(\alpha,(\G_1,\beta_1),\dots,(\G_n,\beta_n))$ (shorthand $(\alpha,\bar \beta)$) where $\alpha \in \zo^n$, and for every $\ell \in [n]$ it holds that $\G_\ell$ is the description of an Abelian group and $\beta_i \in \G_i$, by the function 
	\[ f_{\alpha,\bar{\beta}} : \bigcup_{\ell\in[n]} \zo^\ell \to \bigcup_{\ell \in [n]} \G_\ell, ~~\text{given by} \]
	\[ f_{\alpha,\bar{\beta}}(x_1,\dots,x_\ell) = 	\begin{cases}
									\beta_\ell &\text{if}~(x_1,\dots,x_\ell)=(\alpha_1,\dots,\alpha_\ell)\\
									0 &\text{else}
									\end{cases} \]
\end{definition}

In doing so, we will consider a generalization of standard DPF machinery, endowed with an {\em incremental} evaluation structure wherein each bit of the input $x$ can be incorporated one by one within the DPF evaluation. This will enable us an efficient means for a form of DPF evaluation on input prefixes. The resulting scheme has the same $\Gen$ key-generation syntax and security guarantees as standard DPF. The incremental nature appears in the $\Eval\Next$ and $\Eval\Prefix$ procedures in the place of standard DPF $\Eval$. 

Notationally: in what follows, public values associated with level $\ell$ will be marked with subscript $\ell$; private values (those known or computed only by one party) will receive superscript $i$ and subscripted party id $b$.

\begin{definition}[Incremental DPF: Syntax] \label{def:IDPF-syntax}
A ($2$-party) {\em incremental distributed point function (IDPF)} scheme is a tuple of algorithms $(\Gen,\Eval\Next,\eval\Prefix)$ such that:
  \begin{itemize}
  
  \item $\IDPF.\Gen(1^\lambda, (\alpha,(\G_1,\beta_1),\dots,(\G_n,\beta_n)))$ is a PPT {\em key generation algorithm} that given $1^\lambda$ (security parameter) and a description $(\alpha,(\G_1,\beta_1),\dots,(\G_n,\beta_n))$ of an all-prefix point function, outputs a pair of keys and public parameters $(k_0,k_1, \pp = (\pp_1,\dots,\pp_n))$. We assume that $\pp$ includes the public values $\lambda,n,\G_1,\ldots,\G_n$.
  
  \item $\IDPF.\Eval\Next(b, \st^{\ell-1}_b, \pp_\ell, x_\ell)$ is a polynomial-time {\em incremental evaluation algorithm} that given a server index $b\in\zo$, secret state $\st^{\ell-1}_b$, public parameters $\pp_\ell$, and input evaluation bit $x_\ell \in \zo$, outputs an updated state and output share value: $(\st^\ell_b,y^\ell_b)$.
  
  \item $\IDPF.\eval\Prefix(b,k_b,\pp,(x_1,\dots,x_\ell))$ is a polynomial-time {\em prefix evaluation algorithm} that given a server index $b \in \zo$, key $k_b$, public parameters $\pp$, and input evaluation prefix $(x_1,\dots,x_\ell) \in \zo^\ell$, outputs a corresponding output share value $y^\ell_b$.
  \end{itemize}
\end{definition}

\begin{definition}[IDPF: Correctness and Security] \label{def:IDPF-semantics}
We say that $(\IDPF.\Gen,\IDPF.\Eval\Next)$ as in Definition~\ref{def:IDPF-syntax} is an {\em incremental DPF scheme} if it satisfies the following requirements. 
\begin{itemize}
 
  \item {\bf Correctness:}  
    For every $\lambda,n \in \N$, value $\alpha \in \zo^n$, abelian groups and values $\bar \beta = ((\G_1,\beta_1)\dots,(\G_n,\beta_n))$, level $\ell \in [n]$, and input prefix $(x_1,\dots,x_\ell) \in \zo^\ell$, the following two properties hold.

  \begin{itemize}
  \item $\Eval\Next$. It holds that
  	\[ \Pr[y^\ell_0 + y^\ell_1 = f_{\alpha,\bar\beta}(x_1,\dots,x_\ell)] = 1, \]
where probability is taken over the choice of $(k_0,k_1, \pp = (\pp_1,\dots,\pp_n)) \getsr \Gen(1^\lambda, (\alpha,(\G_1,\beta_1),\dots,(\G_n,\beta_n)))$, and for each $b \in \zo$, $y_b^\ell$ is given by:
	\begin{algorithmic}[1]
	\State $\st_b^0 \gets k_b$;
	\For{($j=1$ to $\ell$)}
	  \State $(\st_b^j, y_b^j) \gets \IDPF.\Eval\Next(b, \st^{j-1}_b, \pp_j, x_j)$;
	\EndFor \\
	\Return $y_b^\ell$
	\end{algorithmic}
	
    \item $\Eval\Prefix$. It holds that
	  	\[ \Pr[y^\ell_0 + y^\ell_1 = f_{\alpha,\bar\beta}(x_1,\dots,x_\ell)] = 1, \]
where probability is taken over the choice of $(k_0,k_1, \pp = (\pp_1,\dots,\pp_n)) \getsr \Gen(1^\lambda, (\alpha,(\G_1,\beta_1),\dots,(\G_n,\beta_n)))$, and for each $b \in \zo$, $y_b^\ell \gets \Eval\Prefix(b,k_b,\pp,(x_1,\dots,x_\ell))$.
		
    \end{itemize}
	
  \item {\bf Security:} For each $b\in\zo$ there is a PPT algorithm $\Sim_b$ (simulator), such that for every sequence $((\alpha,\bar\beta)_\lambda)_{\lambda\in\N}$ of polynomial-size all-prefix point function descriptions and polynomial-size input sequence $x_\lambda$,
  the outputs of the following experiments $\Real$ and $\Ideal$ are computationally indistinguishable:
  \begin{itemize}
  \item 
$\Real_\lambda$: $(k_0,k_1,\pp) \getsr \Gen(1^\lambda, (\alpha,\bar\beta)_\lambda)$; Output $(k_b,\pp)$.

\item 
  $\Ideal_\lambda$: Output $\Sim_b(1^\lambda, (n,\G_1,\dots,\G_n))$.
  \end{itemize}
  \end{itemize}
\end{definition}

\abbr{}{
\subsection{Instantiation and optimizations for our incremental DPF construction}\label{app:opts}

 \itpara{Instantiating PRG via AES.} Following~\cite{Splinter}, the length-doubling PRG $G$ can be instantiated via two executions of {\em fixed-key} AES (taking e.g.\ $\lambda = 127$), using AES-NI hardware instructions for AES encryption. Evaluation via $\Eval\Next$ or $\Eval\Prefix$ thus requires just one fixed-key AES encryption per level, as only one half of each expanded PRG output is relevant for a given input. 
  
For example, for $m$-bit output groups $\G_\ell$, this results in the following costs, in units of fixed-key AES encryptions:
    \begin{itemize}
    \item $\Gen$: $4n \left( 1 +  \lceil m / \lambda \rceil \right)$ 
    \item $\Eval\Next(\ell)$: $1 + \lceil m/ \lambda \rceil$ 
    \item $\Eval\Prefix(\ell)$: $\ell( 1 + \lceil m/ \lambda \rceil)$ 
    \end{itemize}
  
\itpara{Recovering DPF.} Taking $\G_\ell = \{\bot\}$ for $\ell \in \{1,\dots,n-1\}$ in the above construction, i.e.\ $\lceil \log |\G_\ell| \rceil = 0$ for all but the final level $\G_n$, we recover the DPF construction and complexity of~\cite{BGI16-FSS}. In this sense, our incremental DPF construction is a strict generalization.

\itpara{Subtractive shares.} If for an application it suffices to produce {\em subtractive}  shares of the output, i.e., for which $y_0 - y_1 = \beta$ as opposed to $y_0+y_1 = \beta$, then the group inverse computation, denoted by multiplication by $(-1)^b$ in line~\ref{step:Eval-W2} of $\Eval\Next$ can be removed.

\itpara{PRG evaluation optimization.} For the case of small output groups $\G_\ell$, instantiating the pseudo-random expansion $\Convert_{\G'_\ell}: \zo^\lambda \to \zo^\lambda \times \G_\ell$ via an execution of the PRG $G : \zo^\lambda \to \zo^{2\lambda+2}$ is wasteful. Instead, this expansion can be absorbed into the next-level execution of $G$, ``stealing'' a portion of the pseudo-random output bits of $G$ to be interpreted as a pseudorandom element of $\G_\ell$. 
  
  For example, implementing $G$ via 2 AES encryptions as described above, but fixing two bits of input, can be viewed as a pseudo-random generator $\zo^{126} \to \zo^{2(126)+2} \times \zo^2$, at the expense of slightly decreased security parameter. The $\zo^{2(126)+2}$ component of the output can be used as required for the remainder of the next-level execution, and the $\zo^2$ portion can be interpreted as a representation of a pseudo-random $\G_\ell$ element (e.g., if $\G_\ell = \Z_4$).
Continuing the running example, e.g.\ for the case of 1- or 2-bit output groups $\G_\ell$, with this optimization results in the following costs, in units of fixed-key AES encryptions:
    \begin{itemize}
    \item $\Gen$: $4n$
    \item $\Eval\Next(\ell)$: $1$
    \item $\Eval\Prefix(\ell)$: $\ell$
    \end{itemize}
}

\begin{figure}
\begin{framed}
{\bf Incremental DPF $(\Gen,\Eval\Next,\Eval\Prefix)$}\\
Let $\G' := \zo^{\lambda} \times \G$, where $\zo^\lambda$ has bitwise addition. \\
  Let $G : \{0,1\}^\lambda \to \{0,1\}^{2\lambda+2}$ and $\Convert_{\G'} : \zo^\lambda \to \G'$ be pseudorandom generators\abbr{.}{ (see Figure~\ref{fig:ConvertFn}).}

  \vspace{.1in}

$\Gen(1^\lambda, (\alpha,(\G_1,\beta_1),\dots,(\G_n,\beta_n)))$:
  \begin{algorithmic}[1]
  \State Let $\alpha = \alpha_1,\ldots,\alpha_n \in \{0,1\}^n$ be the bit decomp of $\alpha$
  \State Sample random $s^{(0)}_0 \getsr \{0,1\}^\lambda$ and $s^{(0)}_1 \getsr \{0,1\}^\lambda$
  \State Let $t^{(0)}_0 \gets 0$ and $t^{(0)}_1 \gets 1$  \label{step:t} \For {$\ell = 1$ to $n$}	
    \State $s^L_b||t^L_b ~\big|\big|~s^R_b || t^R_b  \gets G(s^{(\ell-1)}_b)$ for $b=0,1$
\If {$\alpha_{\ell}=0$} $\Keep \gets L$, $\Lose \gets R$
    \Else ~$\Keep \gets R$, $\Lose \gets L$
    \EndIf 
    \State $s_{CW} \gets s^\Lose_0 \oplus s^\Lose_1$
    \State $t^L_{CW} \gets t^L_0 \oplus t^L_1 \oplus \alpha_{\ell} \oplus 1$ and
		$t^R_{CW} \gets t^R_0 \oplus t^R_1 \oplus \alpha_{\ell}$
    \State $t^{(\ell)}_b \gets t^\Keep_b \oplus t^{(\ell-1)}_b \cdot t^\Keep_{CW}$ for $b=0,1$
    \State $\tilde s^{(\ell)}_b \gets s^\Keep_b \oplus t^{(\ell-1)}_b \cdot s_{CW}$ for $b = 0,1$
\State $s_b^{(\ell)} || W^{(\ell)}_b \gets \Convert_{\G'_\ell} (\tilde s_b^{(\ell)})$ for $b=0,1$  \label{step:Gen-W1}
    \State $W^{(\ell)}_{CW} \gets (-1)^{t_1^{(\ell)}} \cdot [ \beta_\ell - W^{(\ell)}_0 + W^{(\ell)}_1]$  \label{step:Gen-W2}
    \State $CW^{(\ell)} \gets s_{CW} || t^L_{CW} || t^R_{CW} || W^{(\ell)}_{CW}$ 
\EndFor
\State Let $k_b \gets s^{(0)}_b$ for $b=0,1$.
	\State Let $\pp \gets CW^{(1)}, \cdots, CW^{(n)}$ \\
  \Return $(k_0,k_1,\pp)$
  \end{algorithmic}

  \vspace{.1in}
	
$\Eval\Next(b,\st^{\ell-1}_b, \pp_\ell = CW^{(\ell)}, x_\ell)$:
  \begin{algorithmic}[1]
    \State Parse $\st^{\ell-1}_b = s^{(\ell-1)} || t^{(\ell-1)}$
\State Parse $CW^{(\ell)} = s_{CW} || t^L_{CW} || t^R_{CW} || W_{CW}$
		\State Parse $G(s^{(\ell-1)}) = \hat{s}^L|| \hat{t}^L ~\big|\big|~\hat{s}^R || \hat{t}^R $
    \State $\tau^{(\ell)} \gets 
    	(\hat{s}^L||\hat{t}^L ~\big|\big|~\hat{s}^R || \hat{t}^R )  \oplus \newline 
	~~~~(t^{(\ell-1)} \cdot \big[s_{CW} || t^L_{CW} || s_{CW} || t^R_{CW} \big])$
    \State Parse $\tau^{(\ell)} = s^L || t^L ~\big|\big|~ s^R || t^R  \in \{0,1\}^{2\lambda+2}$
    \If {$x_\ell = 0$}  $\tilde s^{(\ell)} \gets s^L, t^{(\ell)} \gets t^L$ 
    \Else ~$\tilde s^{(\ell)} \gets s^R$, $t^{(\ell)} \gets t^R$ 
    \EndIf
    \State $s^{(\ell)} || W^{(\ell)} \gets \Convert_{\G'_\ell} (\tilde s^{(\ell)})$ \label{step:Eval-W1}
\State $\st^\ell \gets s^{(\ell)} || t^{(\ell)}$
    \State $y_b^\ell \gets (-1)^b \cdot [W^{(\ell)} +  {t^{(\ell)}} \cdot W_{CW}]$ \label{step:Eval-W2}
\\
		\Return $(\st^\ell, y_b^\ell )$
  \end{algorithmic}

\vspace{.1in}

$\Eval\Prefix(b, k_b, \pp,(x_1,\dots,x_\ell))$:
  \begin{algorithmic}[1]
	\State Let $s^{(0)}=k_b$ and $t^{(0)} = b$.
  \State Parse $\pp = CW^{(1)}, \cdots, CW^{(n)}$.
  \State $\st^0_b \gets s^{(0)}||t^{(0)}$
  \For {$j=1$ to $\ell$}
    \State $(\st_b^j,y_b^j) \gets \IDPF.\Eval\Next(b,\st^{j-1}_b,CW^{(j)},x_j)$
  \EndFor\\
  \Return $y^\ell_b$ 
  \end{algorithmic}
\end{framed}

\caption{Pseudocode for incremental DPF construction. 
\abbr{}{Pseudocode for $\Convert_\G$ is given in Figure \ref{fig:ConvertFn}.}
The symbol $||$ denotes string concatenation. Subscripts 0 and 1 refer to party id. All $s$ values are $\lambda$-bit strings, $W$ values are elements in $\G_\ell$, which are represented in $\lceil \log |\G_\ell| \rceil$ bits, and $t$ and $b$ values are single bits.}
\label{fig:IDPF}
\end{figure}

\abbr{}{
\begin{figure}[h]
\begin{framed}

$\Convert_{\mathbb{G}}(s)$:
  \begin{algorithmic}[1]
	\State Let $u\gets |\mathbb{G}|$.
	\If {$u=2^m$ for an integer $m$}
    \State Return the group element represented by\\
    \quad\quad $G(s)$, for a PRG $G:\{0,1\}^\lambda \rightarrow \{0,1\}^{m}$.
	\Else 
    \State Let $\ell \gets \lceil \log_2 u\rceil + \lambda$.
		\State Return the group element represented by \\
    \quad \quad $G(s) \bmod u$, for a PRG $G:\{0,1\}^\lambda \rightarrow \{0,1\}^{\ell}$.
	\EndIf
 \end{algorithmic}
\end{framed}
\caption{Pseudocode for converting a string $s \in \{0,1\}^\lambda$ to an element in a group $\mathbb{G}$. If $s$ is random then $\Convert_{\mathbb{G}}(s)$ is pseudo-random.}
\label{fig:ConvertFn}
\end{figure}
}

\subsection{Proof of \cref{prop:IDPF}}\label{app:idpf}

\newcommand{\RemainingCW}{{\sf RemainingCW}}
\begin{proof}[Proof of Proposition~\ref{prop:IDPF}]

Security: We prove that each party's key $k_b$ is pseudorandom. This will be done via a sequence of hybrids, where in each step another correction word $CW^{(\ell)}$ within the key is replaced from being honestly generated to being random, for $\ell=1$ to $n$.

The argument for security goes as follows. Each party $b \in \zo$ begins with a random seed $s^{(0)}_b$ that is completely unknown to the other party. In each level of key generation (for $\ell=1$ to $n$), the parties apply a PRG to their seed $s^{(\ell-1)}_b$ to generate 4 items: namely, 2 seeds $\tilde s_b^L, \tilde s_b^R$, and 2 bits $t_b^L,t_b^R$. This process will {\em always} be performed on a seed which appears completely random and unknown given the view of the other party; because of this, the security of the PRG guarantees that the 4 resulting values appear similarly random and unknown given the view of the other party. 

The $s_{CW} || t^L_{CW} || t^R_{CW}$ portion of the  $\ell$th level correction word $CW^{(\ell)}$  ``uses up'' the secret randomness of 3 of these 4 pieces: the two bits $t_b^L,t_b^R$, and the seed $\tilde s_b^\Lose$ for $\Lose \in \{L,R\}$ corresponding to the direction {\em exiting} the ``special path'' $\alpha$ (i.e.\ $\Lose = L$ if $\alpha=1$ and $\Lose = R$ if $\alpha=0$). However, given this $CW^{(\ell)}$, the remaining seed $\tilde s^\Keep_b$ for $\Keep \neq \Lose$ still appears random to the other party. 
This seed $\tilde s^\Keep_b$ is expanded to $s_b^{(\ell)} || W^{(\ell)}_b$, again appearing random to the other party. The final portion $W_{CW}$ of the $\ell$th level correction word $CW^{(\ell)}$ ``uses up'' the secret randomness of the $W^{(\ell)}$, leaving $s_b^{(\ell)}$ that appears random to the other party.

The argument then continued in similar fashion to the next level, beginning with seeds $s_b^{(\ell)}$.
\end{proof}

\subsection{Malicious-secure sketching}
\label{app:mal-sketch}

\paragraph{Protocol description.}
The client holds vector $\bar v$. The servers hold common randomness $\bar r = (r_1,\dots,r_m)$ and $\bar r^* = (r_1^2,\dots,r_m^2)$.

\begin{enumerate}
\item Client samples random $\kappa \gets \F$ and sends:
  \begin{itemize}
  \item DPF shares of $(\bar v, \kappa \bar v)$ 
  \item Correlation for aiding servers' secure computation; Namely,
  	additive shares (over $\F$) of: 
  	\begin{enumerate}
	\item Random $(a,b,c) \in \F^3$. These can be emulated by sending a single random PRG seed to each server, amortizing across the number of levels. 
	\item $A := [-2a + \kappa]$
	\item $B := [(a^2+b) +(-a\kappa +c)]$
	\end{enumerate}
  \end{itemize}
Altogether, the {\bf client sends (amortized) 2 $\F$-elements plus one DPF key} (with payload $\zo \times \F$) {\bf to each server}.  

\item Servers compute sketch. Each server $\sigma \in \zo$:

Evaluates his DPF key on all elements of the domain $[m]$; denote the resulting vector by $(\bar v_\sigma, \bar v^*_\sigma) \in \zo^m \times \F^m$.

Computes the following 3 field elements:
\begin{align*}
  z_\sigma &\gets \langle \bar r, \bar v_\sigma \rangle;& z^*_\sigma &\gets \langle \bar r^*, \bar v_\sigma \rangle;
  &z^{**}_\sigma &\gets \langle \bar r,   \bar v_\sigma^* \rangle.
\end{align*}

    \item Round 1: Servers exchange masked input values. 
    
    Each server $\sigma \in \zo$ sends: 
    	\[ (z_\sigma + a_\sigma),~ (z^*_\sigma + b_\sigma), ~(z^{**}_\sigma + c_\sigma),\] 
where $a_\sigma,b_\sigma,c_\sigma$ are his shares of $a,b,c$.
    
    {\bf Communication: 3 $\F$-elements per server.}
    
    Locally: Compute $Z = (z_0+a_0)+(z_1+a_1)=(z+a)$. Compute analogous $Z^*, Z^{**}$.

     Locally use these (public) $Z,Z^*, Z^{**}$, the PRG-defined shares of $(a,b,c)$, and client-provided shares of $A,B$ to homomorphically derive additive shares of the following degree-2 polynomial:    
    \begin{align*}
     (z^2 - &z^*) + (\kappa \cdot z - z^{**}) \\
    	&= \left[ (Z - a)^2 - (Z^*-b) \right] +  \left[ (Z-a)\kappa - (Z^{**} - c) \right] \\
    	&= [Z^2] - [Z^*] - [Z^{**}] + A[Z] + B 
    \end{align*}

(Note that each term in square brackets is publicly computable, and each coefficient is held additively secret shared by the servers.)
    
    \item Round 2: Exchange evaluated shares: Each server sends their share of the above to the other server.
    
    {\bf Communication: 1 $\F$-element per server.}
    
    Locally: Combine the shares. If the sum is nonzero, abort; otherwise, accept.

{\bf Total Comm: 2 rounds, 4 $\F$-elements per server.}
  
\end{enumerate}

\abbr{In the full version of this work~\cite{full}, we prove that the above protocol
protects security against both a malicious client and a malicious server.}{

\subsubsection{Security analysis}

We begin by showing that the use of client-provided correlated randomness (shares of $a,b,c,A,B$) as well as the extra verification checks does not adversely affect the guarantees of the Boyle et al.~\cite{BGI16-FSS} protocol against a malicious client.

\begin{claim}[Malicious client] 
Suppose $\bar v$ as defined by the server's shares $\bar v_0,\bar v_1$ is {\em not} a legal vector $\alpha \bar e_i$ for some $i \in [m]$ and $\alpha \in \zo$.  
Then the client will be rejected except with probability bounded by $2/|\F|$.
\end{claim}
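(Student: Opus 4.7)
The plan is to view the server's verification quantity
$F := Z^2 - Z^* - Z^{**} + AZ + B$ as a polynomial in the servers' own random choices $r_1,\dots,r_m$ (with $\bar r^*$ set to $(r_1^2,\dots,r_m^2)$), and to show that whenever $\bar v = \bar v_0 + \bar v_1$ is not of the form $\alpha\bar e_i$ with $\alpha\in\zo$, this polynomial is nonzero of total degree~$2$. An application of Schwartz–Zippel will then yield the $2/|\F|$ bound.

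First I would expand $F$ by substituting $Z = z + a$, $Z^* = z^* + b$, $Z^{**} = z^{**} + c$ with $z = \sum_i r_i v_i$, $z^* = \sum_i r_i^2 v_i$, and $z^{**} = \sum_i r_i v_i^*$, obtaining
\[
F \;=\; z^2 - z^* \;+\; (2a + A)\,z \;-\; z^{**} \;+\; \bigl(a^2 - b - c + Aa + B\bigr).
\]
Collecting by degree in $r$, the only degree-$2$ contribution comes from $z^2 - z^*$, namely
\[
Q(r) \;=\; \Big(\sum_i r_i v_i\Big)^{\!2} - \sum_i r_i^2 v_i \;=\; \sum_i r_i^2\bigl(v_i^2 - v_i\bigr) \;+\; 2\sum_{i<j} r_i r_j\, v_i v_j .
\]
Crucially, $Q$ depends only on $\bar v$; the client's freedom to choose $\bar v^*, a, b, c, A, B$ adversarially (and possibly inconsistently with any $\kappa$) influences only the degree-$\le 1$ part of $F$ in $r$.

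Next I would argue $Q(r)\not\equiv 0$ under the hypothesis that $\bar v$ is illegal. Illegality means at least one of: (a) some coordinate $v_i\notin\zo$, in which case $v_i^2 - v_i\neq 0$ and the coefficient of $r_i^2$ in $Q$ is nonzero; or (b) two distinct coordinates $v_i, v_j$ are both nonzero, in which case the coefficient of $r_i r_j$ equals $2v_i v_j\neq 0$ (using that the prime fields $\F$ in use have characteristic $\neq 2$). In either case $Q$ is a nonzero polynomial of degree exactly $2$, so $F$, being $Q$ plus terms of strictly smaller degree in $r$, is itself a nonzero polynomial of total degree~$2$ in $r_1,\dots,r_m$.

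Finally, since $\bar r$ is drawn uniformly from $\F^m$ (or, in the information-theoretic variant, from the degree-$m$ moment curve, to which Schwartz–Zippel still applies in a single variable), the Schwartz–Zippel lemma gives $\Pr_{\bar r}[F(\bar r)=0]\le 2/|\F|$, which is precisely the advertised rejection guarantee. The main subtlety I expect to handle carefully is not an algebraic one but a bookkeeping one: to rule out that an adversarial client could somehow exploit an inconsistent choice of $(a,b,c,A,B)$ or of $\bar v^*$ to cancel $Q$, one must make the above decomposition explicit and observe that all such client-controlled quantities appear only in the linear and constant parts of $F$ as a polynomial in $r$, and hence cannot touch $Q$.
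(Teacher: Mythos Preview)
Your proposal is correct and follows essentially the same approach as the paper: expand the verification expression as a polynomial in the servers' random values $r_1,\dots,r_m$, observe that the degree-$2$ part $\sum_i r_i^2(v_i^2-v_i)+\sum_{i\neq j}r_ir_j v_iv_j$ depends only on $\bar v$ (the client-controlled $\bar v^*,a,b,c,A,B$ affecting only lower-degree terms), argue it is nonzero when $\bar v$ is illegal, and apply Schwartz--Zippel. You are slightly more careful than the paper in noting the characteristic $\neq 2$ requirement for the cross-term coefficient $2v_iv_j$ and in explicitly listing all client-controlled quantities, but the argument is otherwise identical.
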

\begin{proof} 
The proof follows the argument as in Boyle et al.~\cite{BGI16-FSS}.
In our case, a malicious client has the ability to send arbitrary maliciously chosen values for: $\bar v, \bar v^*$ (supposed to be $\kappa \bar v$), $A, B$. 

Consider the expression evaluated by the servers given these values, expressed as a polynomial in the variables $r_1,\dots,r_m$:
    \begin{align*}
      [Z^2& - Z^*] - [Z^{**}] + A[Z] + B \\
    	&= \left[(\Sigma r_iv_i + a)^2 - (\Sigma r_i^2 v_i + b)\right] - [\Sigma r_i v^*_i + c]  \\
		&\hspace{2in} + A[\Sigma r_iv_i + a] + B \\
	&= \Sigma r_i^2(v_i^2-v_i) + \Sigma_{i\neq j} r_ir_j (v_iv_j) + \Sigma r_i (X_i) + Y,
    \end{align*}  
for some terms $X_i,Y$ that do not contain any $r_i$. By the Schwartz-Zippel Lemma, if the above polynomial is not the 0 polynomial, then over a random choice of the variables $r_1,\dots,r_m$, the polynomial will evaluate to 0 with probability no greater than $2/|\F|$. Thus to succeed with greater probability within the verification check, the adversary must select offsets for which the coefficient of each monomial of the respective polynomials is set to 0. 

In particular, the coefficient of each $r_i r_j$ for $i \neq j$ requires $v_iv_j = 0$, and thus $\forall i \neq j, v_iv_j = 0$. This implies $\bar v$ can have at most one nonzero entry. Further, the coefficient of $r_i^2$ is $(v_i^2-v_i)$, requiring $v_i \in \{0,1\}$ for each element $v_i$ of $\bar v$. Combined, these together imply that $\bar v$ is of the required form.
\end{proof}

\newcommand{\view}{{\sf view}}

We now prove that the above protocol guarantees client privacy against a malicious server.

\begin{proposition}[Malicious server]
For every malicious server $\tilde S$ there exists a simulator $\Sim$ for which the view $\view_{\tilde S}(\bar v)$ of $\tilde S$ in execution of the protocol on {\em honest-client input} is indistinguishable to the output of $\Sim$.
\end{proposition}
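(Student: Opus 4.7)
The plan is to construct an explicit simulator $\Sim$ whose only cryptographic dependency is the single-key DPF simulator from Section~\ref{sec:bg:private}; all remaining randomness in $\tilde S$'s view will be perfectly masked by the client's uniform choices of $(a,b,c,\kappa) \in \F^4$. Indistinguishability is proved by a short hybrid sequence in which each step is either identical in distribution or indistinguishable under DPF security.

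First, I would replace the DPF share handed to $\tilde S$ by the output of the single-key DPF simulator; this step is computationally indistinguishable by DPF security. Next, the adversary's additive shares of $(a,b,c,A,B) \in \F^5$ are, in the real execution, jointly uniform (the honest server's complementary shares absorb any structure of these values), so the simulator samples them uniformly. Finally, the honest server's round-$1$ message $(z_\mathrm{hon}+a_\mathrm{hon},\, z^*_\mathrm{hon}+b_\mathrm{hon},\, z^{**}_\mathrm{hon}+c_\mathrm{hon})$ is one-time-padded by its own shares of $(a,b,c)$, which are uniform over $\F$ and independent of $\tilde S$'s prior view, so the simulator samples these three elements uniformly as well.

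The heart of the argument is simulating the honest server's round-$2$ share. Let $(\Delta,\Delta^*,\Delta^{**})$ be the additive offsets applied by $\tilde S$ to its round-$1$ outputs, so that the publicly reconstructed values are $(z+a+\Delta,\, z^*+b+\Delta^*,\, z^{**}+c+\Delta^{**})$. Substituting $A=-2a+\kappa$ and $B=a^2+b-a\kappa+c$ into the servers' degree-$2$ polynomial $Z^2-Z^*-Z^{**}+AZ+B$ and expanding, all cross-terms involving $a,\,a^2,\,a\kappa,\,b,\,c$ cancel, leaving total output
\[ (z^2-z^*) + (\kappa z-z^{**}) + 2z\Delta + \Delta^2 + \kappa\Delta - \Delta^* - \Delta^{**}. \]
For an honest client $\bar v = \bar e_i$ we have $z^*=z^2$ and $z^{**}=\kappa z$, so the first two brackets vanish and the total equals $2z\Delta + \Delta^2 + \kappa\Delta - \Delta^* - \Delta^{**}$. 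If $\Delta=0$, this is the publicly computable constant $-\Delta^*-\Delta^{**}$, and the simulator sends this constant minus $\tilde S$'s own round-$2$ share; if $\Delta \neq 0$, the term $\kappa\Delta$ is uniform over $\F$ because $\kappa$ is perfectly hidden in $\tilde S$'s joint view of its shares of $(a,b,c,A,B)$ together with the simulated DPF key, so the simulator sends a uniformly random field element.

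The main technical obstacle is twofold: carrying out the cancellation bookkeeping in the round-$2$ expansion above cleanly, and then verifying that $\kappa$ really does remain uniform given $\tilde S$'s entire transcript (not merely its shares of $A$ and $B$ in isolation), so that $\kappa\Delta$ is genuinely uniform when $\Delta \neq 0$. Once these points are established, chaining the four hybrids yields computational indistinguishability between $\Sim$'s output and the real view of $\tilde S$, completing the argument.
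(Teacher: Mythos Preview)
Your proposal is correct and follows essentially the same approach as the paper's own proof. You compute the same key expression $2z\Delta + \Delta^2 + \kappa\Delta - \Delta^* - \Delta^{**}$ (the paper writes it with $z=r_i$), and your simulator has the identical structure: simulate the DPF share, send uniform shares and uniform round-1 masks, extract $(\Delta,\Delta^*,\Delta^{**})$, and branch on $\Delta=0$ versus $\Delta\neq 0$.

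The only stylistic difference is that the paper first invokes a general black-box statement (citing \cite{GenkinIPST14,BGI16-FSS,BGI19}) that the client-aided two-party protocol is secure against a malicious server up to additive offsets on the inputs and output, and then analyzes the ideal functional $(z+\Delta)^2-(z^*+\Delta^*)+\kappa(z+\Delta)-(z^{**}+\Delta^{**})$ directly; you instead expand the concrete protocol polynomial $Z^2-Z^*-Z^{**}+AZ+B$ with the mask substitutions and carry out the cancellation yourself. Both routes land on the same expression and the same case split, so your more hands-on expansion buys self-containment at the cost of a little extra bookkeeping, while the paper's version is shorter but relies on the cited additive-attack framework.
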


\begin{proof} 
The client-aided two-party secure computation protocol used (i.e., client-supplied correlated randomness, combined with Rounds 1 \& 2) is secure against a malicious server, up to additive offsets to inputs and outputs of the computation~\cite{GenkinIPST14,BGI16-FSS,BGI19}. Note that an additive offset to the {\em output} is irrelevant for client privacy (recall that we do not address robustness of the computation against a malicious server).

Consider then the effect of maliciously selected additive offsets to the {\em inputs} $z,z^*,z^{**}$ of the secure computation. Because of the random secret mask values $a,b,c$, the adversary's offsets $\Delta,\Delta^*,\Delta^{**}$ must be selected independently of the true values of $z,z^*,z^{**}$.

An honest client implies its corresponding vector $\bar v$ is of legal form, and thus $z=r_i$ for some $i \in \{0\} \cup[m]$, where $r_0 := 0$ for notational simplicity. Similarly, $z^* = r_i^2 v_i$, and $z^{**} = \kappa r_i$. Consider the resulting output computed within the secure computation on the corresponding offset inputs:
    \begin{align*}
    [(&z+\Delta)^2 - (z^* + \Delta^*)] +  [ \kappa(z + \Delta) - (z^{**} + \Delta^{**})] \\
      &= [(r_i+\Delta)^2 - (r_i^2 + \Delta^*)] +  [ \kappa(r_i + \Delta) - (\kappa r_i + \Delta^{**})] \\
      &= [2r_i \Delta + \Delta^2 - \Delta^*] +  [\kappa \Delta - \Delta^{**} ] \\
      &=  \kappa[\Delta] + [2r_i \Delta + \Delta^2 - \Delta^*- \Delta^{**}]  
    \end{align*}
If it is the case that $\Delta \neq 0$, then the above expression is uniformly distributed over the client's random (secret) choice of $\kappa$. On the other hand, if $\Delta = 0$, then the potentially sensitive contribution $2r_i\Delta$ is removed, and the resulting expression $(-\Delta^* - \Delta^{**})$ is fully simulatable.  

This gives rise to the following simulator. 
  \begin{enumerate}
  \item $\Sim$ sends values on behalf of the Client: (a) pseudorandom shares of $(a,b,c) \in \F^3$, (b) random $\F$-elements in the place of additive shares of $A,B$, and (b) a DPF share generated for an arbitrary input in the place of $(\bar v,\kappa \bar v)$.
  \item $\Sim$ sends random values on behalf of the honest server in Round 1 in the place of shares of the masked inputs $Z=(z+a),Z^*=(z^*+b),Z^{**}=(z^{**}+c)$. 
  \item $\Sim$ computes the values the malicious server {\em should} have sent in Round 1 (corresponding to his shares of $Z,Z^*,Z^{**}$), as a function of the received simulated values from the previous two steps and $\bar r,\bar r^*$. Given the values the malicious server {\em did} send in Round 1, denote the effective additive offsets to the correct values as $(\Delta,\Delta^*,\Delta^{**}) \in \F^3$. 
    \begin{itemize}
    \item If $\Delta \neq 0$, Simulate the Round 2 message of the honest server with a random $\F$ element. 
    \item If $\Delta = 0$, then simulate as the appropriate additive share of the output value $(-\Delta^*-\Delta^{**})$. 
    \end{itemize}
  \end{enumerate}
\end{proof}
}

 }
\abbr{}{

\section{Extractable DPF}\label{app:taint}\label{app:extractx}
In previous work \cite{GI14,BGI15,BGI16-FSS} a DPF scheme was defined as a pair of algorithms $(\Gen,\Eval)$, such that $\Gen$ takes as input a security parameter and outputs a pair of keys $(k_0,k_1)$, while $\Eval$ takes as input a key and an input point $x$ and outputs a group element. In the DPF constructions of these papers, an honest execution $\Gen$ results in two output keys that have a shared portion. These constructions are not extractable since a malicious client can generate two keys in which this part is not identical, and thereby control the output value of two locations instead of the output at just a single point. 

We use an  alternative formulation for DPF, which we call {\em DPF with public parameters}. The definition separates the keys into two private parts $(k_0,k_1)$ and a public part $\pp$, similarly to Definitions \ref{def:IDPF-syntax} and \ref{def:IDPF-semantics} for IDPF. A party running the $\Eval$ algorithm takes as input
a full key $(k_b,\pp)$ for $b=0,1$.

\begin{definition}[DPF with public parameters: Syntax] \label{def:alt-DPF}
A ($2$-party) {\em distributed point function (DPF)} scheme 
is a pair of algorithms $(\Gen,\Eval)$ such that:
  \begin{itemize}
  \item $\Gen(1^\lambda, \alpha,(\G,\beta))$ is a PPT {\em key generation algorithm} that given $1^\lambda$ (security parameter) and a description $\alpha,(\G,\beta)$ of a point function, where $\alpha\in\zo^n$, $\G$ is an Abelian group and $\beta\in\G$, outputs a pair of keys and public parameters $(k_0,k_1, \pp)$. We assume that $\pp$ determines the public values $\lambda,n,\G$. 
  
  \item $\Eval(b,k_b,\pp,x)$ is a polynomial-time {\em evaluation algorithm} that given a server index $b \in \zo$, key $k_b$, public parameters $\pp$, and input $x \in \zon$, outputs a corresponding output share value $y_b$.
  \end{itemize}
\end{definition}

The correctness and security properties of DPF with public parameters are essentially identical to the analogous  properties of IDPF in Definition \ref{def:IDPF-semantics}.

We now formally define the basic notion of extractable DPF. The following definition extends Definition~\ref{def-edpf-simple} by allowing the adversary to pick an arbitrary ``sparse'' payload subset $P\subset\G \setminus \{0\}$, which can be represented by an efficient circuit. We call $P$ the set of {\em  permissible outputs}.

\begin{definition}[Extractable DPF]
\label{def:cr}
We say that a DPF scheme in the random-oracle model is {\em extractable} if there is a PPT extractor $E$,
such that every PPT adversary $A$ wins the following game with negligible probability in the security parameter $\lambda$, 
taken over
the choice of a random oracle~$G$ and the secret random coins of $A$.

\begin{itemize}
\item $(1^n,\G,P) \leftarrow A(1^\lambda)$, where $\G$ is an Abelian group of size $|{\mathbb G}|\ge 2^\lambda$ and $P \subseteq \G \setminus \{0\}, |P| \le 2^{\lambda/3}$, is represented by a circuit $P:\G\to\zo$.
\item $(k^*_0,k^*_1,\pp^*,x^*) \leftarrow A^G(1^\lambda,1^n,\G,P)$, where $x^* \in \zon$, and $G$ is a random oracle. We assume that $\pp^*$ determines the correct public values $(1^\lambda,1^n,\G)$. 
\item $x \leftarrow E(k^*_0,k^*_1,\pp^*,P,T)$, where $x \in \zon$ and $T=\{ q_1,\ldots,q_t \}$ is the transcript of $A$'s $t$ oracle queries. 
\end{itemize}

$A$ wins the game 
if $x^* \neq x$ and $\Eval^G(0,k^*_0,pp^*,x^*)+$ $\Eval^G(1,k^*_1,pp^*,x^*)$ $\in P$.  
\end{definition}

The size restrictions on $\G$ and $P$ are used to simplify Definition~\ref{def:cr}. Lemma \ref{lm:taint} below analyzes the extractability property of our DPF and IDPF constructions for more general $\G$ and $P$. 

Definition \ref{def:cr} can be generalized in a natural way to extractable IDPF.
\begin{definition}[Extractable IDPF]
\label{def:cri}
We say that an IDPF scheme in the random-oracle model is {\em extractable} if there is a PPT extractor $E$,
such that every PPT adversary $A$ wins the following game with negligible probability in the security parameter $\lambda$, 
taken over
the choice of a random oracle~$G$ and the secret random coins of $A$.

\begin{itemize}
\item $(1^n,(\G_1,P_1),\ldots,(\G_n,P_n)) \leftarrow A(1^\lambda)$, where it holds for all $i=1,\ldots,n$ that $\G_i$ is an Abelian group of size $|{\mathbb G_i}|\ge 2^\lambda$ and $P_i \subseteq \G_i \setminus \{0\}, |P_i| \le 2^{\lambda/3}$, is represented as a circuit $P_i:\G \to \{0,1\}$.
\item $(k^*_0,k^*_1,\pp^*,x^*) \leftarrow A^G(1^\lambda,1^n,(\G_1,P_1),\ldots,(\G_n,P_n))$, where $x^* \in \{0,1\}^j, 1 \le j \le n$, and $G$ is a random oracle. We assume that $\pp^*$ includes the public values $(1^\lambda,1^n,\G_1,\ldots,\G_n)$. 
\item $x \leftarrow E(k^*_0,k^*_1,\pp^*,P_1,\ldots,P_n,T)$, where $x \in \zon$ and $T=\{ q_1,\ldots,q_t \}$ is the transcript of $A$'s $t$ oracle queries. 
\end{itemize}

$A$ wins the game if $x^* \neq x^{|x^*|}$, for a prefix $x^{|x^*|}=(x_1,\ldots,x_{|x^*|})$ of $x$, and $\IDPF.\eval\Prefix^G(0,k^*_0,\pp^*,x^*)+\IDPF.\eval\Prefix^G(1,k^*_1,\pp^*,x^*) \in P_{|x^*|}$.  
\end{definition}

\begin{notation}
Let $\G$  be a group with group action $+$ and let $P \subseteq \G \setminus \{0\}$. We denote by $P^-$ the set of all differences of elements in $P$, i.e. $P^-=\{a \in \G~|~\exists p,p' \in P, p+(-p')=a\}$. For $n$ pairs $(\G_1,P_1),\ldots,(\G_n,P_n)$, let $\rho= \max \left\{\frac{|P^-_1|}{\G_1},\ldots,\frac{|P^-_n|}{\G_n}\right\}$.
\end{notation}

Some useful examples of $P$ and $P^-$ include any set $P$ such that $|P|=1$, in which case $|P^-|=1$, and any set $P$ which is an interval, i.e. $\exists~a,g \in \G$ such that $P=\{a+i \cdot g~|~0\le i <|P|-1\}$, in which case $|P^-| \le 2|P|$. For a general subset $P$ it holds that $|P^-| \le |P|^2$.

\begin{lemma}
\label{lm:taint}
The scheme constructed in Section \ref{sec:inc} is an extractable IDPF scheme and the DPF scheme constructed in \cite{BGI16-FSS} is an extractable DPF  scheme. Moreover, the probability of an adversary $A$ winning the security game in either scheme is at most $\epsilon_A = \left(4t^2+2nt+1 \right) \left(\max \left\{\rho,\frac{1}{2^\lambda}\right\}\right)$.
\end{lemma}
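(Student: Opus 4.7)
The proof plan is to build an extractor $E$ that, from the transcript $T$ of $A$'s $t$ oracle queries, reconstructs the unique prefix $x$ on which the adversary can meaningfully steer the sum $\IDPF.\eval\Prefix^G(0,k_0^*,\pp^*,\cdot)+\IDPF.\eval\Prefix^G(1,k_1^*,\pp^*,\cdot)$ into the permissible set. Given $(k_0^*,k_1^*,\pp^*,(P_\ell)_\ell,T)$, the extractor walks the IDPF tree level by level: starting from the root seeds $s_b^{(0)}$ encoded in $k_b^*$ and using the correction words in $\pp^*$, it recomputes the candidate seed/control-bit pairs $(s_b^{(\ell)},t_b^{(\ell)})$ that $\Eval\Next$ would produce, but only when the required oracle value $G(s_b^{(\ell-1)})$ already appears in $T$. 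A prefix is called \emph{known} if, for both $b\in\{0,1\}$, its entire evaluation path can be traced this way from $T$; $E$ outputs the lexicographically first known prefix (if any) whose simulated summed output lies in the corresponding $P_\ell$, and $\bot$ otherwise. The DPF case of the lemma follows as a special case where only $P_n$ is nontrivial.

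To establish correctness I would argue that whenever $A$ wins with $(k_0^*,k_1^*,\pp^*,x^*)$, either $x^*$ equals $E$'s output or a low-probability ``collision'' event has occurred. If the path to $x^*$ is entirely known, then the simulated output at $x^*$ equals the true output and lies in $P_{|x^*|}$, so --- assuming at most one such known prefix exists --- $E$ returns $x=x^*$, contradicting the winning condition. Otherwise, along the path to $x^*$ there is a first level $\ell^*$ at which one of the two parties' seeds is \emph{fresh}: never queried by $A$. In the random-oracle model the answer $G(s^{(\ell^*-1)})$ is then uniform given the prior transcript, so the resulting $\tilde s^{(\ell^*)}$, its Convert-output pair, and every subsequent $W^{(\ell)}$ along the path to $x^*$ are uniform; consequently the summed output $(W_0-W_1)+(t_0-t_1)W_{CW}^{(|x^*|)}$ is uniform over a coset of $\G_{|x^*|}$ determined only by public information. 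The probability this lands in $P_{|x^*|}$ is at most $\rho$ by the definition of $\rho$ (via $P^-$, accounting for the signed contributions of the two parties), giving the ``$+1$'' summand of the final bound.

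I would close the argument with a union bound over the ``good events'' tacitly assumed above. Collisions among the constituent output pieces of $A$'s $t$ oracle queries contribute at most $4t^2$ pair-events, each of probability at most $\max\{\rho,1/2^\lambda\}$; collisions between a candidate seed on one of the $\le 2n$ path positions examined by $E$ and a previously queried seed contribute at most $2nt$ additional events; and the single fresh-output event of the previous paragraph contributes the remaining $+1$. Summing yields the claimed $\epsilon_A=(4t^2+2nt+1)\max\{\rho,1/2^\lambda\}$. The main obstacle, I expect, will be the careful case analysis around the correction words: because $A$ chooses $\pp^*$ \emph{after} seeing its oracle answers, a cleverly crafted $W_{CW}^{(\ell)}$ could in principle force a single known seed to produce outputs in $P$ at two distinct leaves. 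Ruling this out leans on the structural property of the BGI16 construction --- namely, that at each level $s_{CW}$ corrects only one of the two parties' seeds, as determined by their control bits $t_b^{(\ell-1)}$ --- and forces the ``known path'' bookkeeping in $E$ to track both parties' seeds and control bits \emph{jointly}, rather than independently.
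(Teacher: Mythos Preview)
Your overall strategy matches the paper's: define an extractor $E$ that traces the IDPF evaluation tree using only oracle answers in $T$, then split into cases depending on whether the path to $x^*$ is fully reconstructible from $T$. The paper's three cases are essentially your ``known path,'' ``fresh seed,'' and an extra case where all queries are in $T$ but $E$ halted early due to a repeated seed (this last one is what accounts for $3t^2/2^\lambda$ of the $4t^2$ term, and you do not separate it out).

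There is, however, a genuine gap in your treatment of the central difficulty you yourself flag. You write that ruling out two known prefixes both landing in $P_\ell$ ``leans on the structural property of the BGI16 construction --- namely, that at each level $s_{CW}$ corrects only one of the two parties' seeds.'' That structural property does not do the job: a malicious client can easily produce keys in which many prefixes are fully known (all relevant seeds in $T$), and nothing about the $s_{CW}$ mechanism prevents this. The actual argument is probabilistic and uses $P^-$, which you invoke in the wrong place. For two known $\ell$-bit prefixes $x^*$ and $x^\ell$, the outputs are $W^*+W_{CW}$ and $W^{(\ell)}+W_{CW}$ with the \emph{same} public $W_{CW}^{(\ell)}$; the $W$-values are determined by oracle answers \emph{before} the adversary commits to $W_{CW}$. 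Both landing in $P_\ell$ forces $W^*-W^{(\ell)}\in P_\ell^-$, an event over at most $t(t-1)$ ordered pairs of oracle-derived values, each with probability $|P_\ell^-|/|\G_\ell|\le\rho$. This is where the $t^2\rho$ piece of the $4t^2$ term comes from; your $+1$ term should instead be bounded simply by $|P_\ell|/|\G_\ell|\le\rho$ (a single fresh output is uniform), with no $P^-$ involved.

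A smaller point: your extractor as described is not obviously polynomial-time. If many seeds collide, naively following every prefix whose queries are in $T$ can branch exponentially. The paper handles this explicitly by maintaining a set $ST$ of (seed, level) pairs and halting exploration when a seed repeats at the same level, bounding the number of examined strings by $2nt$; the probability that this early halt affects $x^*$ (Case 3) is the remaining $3t^2/2^\lambda$.
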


\begin{proof}
Let $A$ be an adversary, let its output be $(1^n,(\G_1,P_1),\ldots,(\G_n,P_n)),(k^*_0,k^*_1,\pp^*,x^*)$, and let $T=\{ q_1,\ldots,q_t \}$ be the transcript of its oracle queries. Recall that for each string $x^i=x_1,\ldots,x_i \in \{0,1\}^i$ the algorithm $\IDPF.\Eval\Prefix(b, k^*_b, \pp^*, x^i)$ returns $y^i_b$ as output and its last internal state is $\st_b^i$. 

The extractor algorithm $E$ is a restriction of the following algorithm $E'$, which may run in super-polynomial time. $E'$ assigns a value to each string $x \in \bigcup_{i=1}^n\{0,1\}^i$ based on the transcript $T$. $E'$ runs $\IDPF.\Eval\Prefix^G(b, k_b,\pp^*, x)$ for $b=0,1$ and stores all the oracle calls it made to $G$ in $T_x$. If $T_x \subseteq T$ then based on the oracle calls that the adversary made the adversary can evaluate the output of the IDPF on  $x$. In this case, $E'$ assigns to $x$ the value $\IDPF.\Eval\Prefix^G(0, k_0^*,\pp^*, x)+\IDPF.\Eval\Prefix^G(1, k_1^*,\pp^* x)$. Otherwise, $E'$ implicitly assigns to $x$ the value $0$. Note that if $E'$ assigns $0$ to $x$ then it also assigns $0$ to any $x'$ such that $x$ is a prefix of $x'$.

The reason that $E'$ may not run  in polynomial  time is that depending on $G$, the oracle queries in $T$ could be sufficient to evaluate a large number of strings in $\bigcup_{i=1}^n\{0,1\}^i$, possibly many more than $t$. $E$ avoids this problem by limiting the number of identical queries it analyzes for each string length to  one. Finally, $E$ chooses for each level $i$, a string $x^i$ that has a value in $P_i$, or an arbitrary string (the string of all $1$ bits) if all the values it assigned are not in $P_i$. Pseudo-code for $E$ appears in Figure \ref{fig:tainted}.

\begin{figure}
\begin{framed}

$E(k^*_0,k^*_1,\pp^*,P_1,\ldots,P_n,T)$
  \begin{algorithmic}[1]
	\State Extract $\lambda,n, \G_1,\ldots,\G_n$ from $\pp^*$.
  \State Let  $C \leftarrow \{(\varepsilon,0)\}$, $\bar{C} \leftarrow \emptyset$, and $ST \leftarrow \emptyset$
	\While {$C \neq \emptyset$}
		\State Let $(x,v) \in C$
		\For {$z = 0,1$}
			\For	{$b = 0,1$}
				\State $y_b^{|x|+1} \leftarrow (\IDPF.\Eval\Prefix)^G(b, k_b^*, \pp^*,x||z)$ \label{ln:evalp}
				\State Let $L_{b,z} \leftarrow (s_b^{(|x|+1)},\tilde{s}_b^{(|x|+1)})$
			\EndFor
			\If {$\exists ((s,\tilde{s}),|x|+1) \in ST$ s.t. $s = s_b^{(|x|+1)}$} \label{ln:ifst}
				\State $Stop \leftarrow 1$
			\Else ~$Stop \leftarrow 0$
			\EndIf
			\If {$L_{0,z},L_{1,z} \subseteq T$ and  (Stop=0)} \label{ln:ifl}	
				\State $C \leftarrow C \bigcup \{(x||z,y_0^{|x|+1}+y_1^{|x|+1})\}$
				\State $\bar{C} \leftarrow \bar{C} \bigcup C$
				\State $ST \leftarrow ST \bigcup \{(L_{0,z},|x|+1),(L_{1,z},|x|+1)\}$
			\EndIf
		\EndFor
		\State $C \leftarrow C \setminus \{(x,v)\}$
	\EndWhile
	\For {$i = 1$ to $n$}
		\State Let $X^i \leftarrow \{x~|~(x,v) \in\bar{C}, |x|=i, v \in P_i\} \bigcup \{1^i\}$
		\State Let $x^i$ be smallest string lexicographically in $X^i$
	\EndFor \\ 
  \Return $(x^1,\ldots,x^n)$
  \end{algorithmic}

\end{framed}

\caption{Pseudocode for extractor algorithm $E$. 
The symbol $\varepsilon$ denotes the empty string. The set $C$ includes pairs $(x,v)$ of  the input strings $x$ that the algorithm intends to examine together with $v$, the sum of the evaluation of the two keys on  $x$. The set $\bar{C}$ includes all the pairs $(x,v)$ that the algorithm examines throughout its execution. The set $ST$ includes all the queries to the oracle $G$ that are made by $(\IDPF.\Eval\Prefix)^G$. A subscript $b \in  \{0,1\}$ refers to party id, while $z \in \{0,1\}$ refers to a bit in the input string. The strings $s_b^{(|x|+1)},\tilde{s}_b^{(|x|+1)}$ are determined by the execution in Line \ref{ln:evalp} of $\IDPF.\Eval\Prefix$. The flag $Stop$ is used to halt the execution if two identical calls to the oracle are made for inputs of equal length. The sequences $L_{b,z}$ are viewed as sets in Line \ref{ln:ifl}. The string $1^i$ is the all-one $i$-bit string.}
\label{fig:tainted}
\end{figure}

$E$ runs in polynomial time  since the number of different strings it examines is at most $2nt$ and it performs at most linear time work in the length of each string. To prove the bound on the number of strings, recall that the oracle queries that $\IDPF.\Eval\Prefix$ makes are on strings $s_b^{(\ell)}$ and $\tilde{s}_b^{(\ell)}$. There are  at most $t$ such distinct queries, and $E$ adds each query together with the length of the associated input string $x$ to  the set $ST$. The if statements in Lines \ref{ln:ifst} and \ref{ln:ifl} ensure that  a query $s_b^{(\ell)}$ appears only once per possible string length in $ST$ and therefore at most $n$ times. $E$ adds a string $x$ to $C$ only  if it adds a tuple that includes $s_b^{(\ell)}$ to $ST$, i.e. at most $nt$  strings are added to $C$ over the course of the algorithm. $E$ examines the two extensions $x||z, z \in \{0,1\}$ for each $x \in C$, and therefore examines at most $2nt$ strings.

We use the following notation for the string $x^*$ that $A$ outputs and the strings $x^i$ that $E$ outputs. Let $\ell=|x^*|$, let $x^{*(i)}$ be an $i$-bit prefix of $x^*$, denote the oracle queries associated with the last bit of $x^{*(i)}$ by $s^{*(i)}_b$ and $\tilde{s}^{*(i)}_b$, and denote the output of $(\IDPF.\Eval\Prefix)^G(b, k_b^*, \pp^*,x^{*(i)})$ by $y^{*i}_b$, for $b \in \{0,1\}$. For any $x^i$ output by $E$, denote the oracle queries by $s^{(i)}_b$ and $\tilde{s}^{(i)}_b$, and denote the output of $(\IDPF.\Eval\Prefix)^G(b, k_b^*, pp^*, x^i)$ by $y^i_b$.

We separate the analysis of the probability that the client outputs $x^* \neq x^\ell$ such that $(\IDPF.\Eval\Prefix)^G(0, k_0^*, \pp^*,x^*)+(\IDPF.\Eval\Prefix)^G(1, k_1^*, \pp^*, x^*) \in P_\ell$ into three cases.
\begin{enumerate}
\item There exist some $1 \le i \le \ell, b \in \{0,1\}$ such that $s_b^{*(i)}$ or $\tilde{s}_b^{*(i)}$ is not a query in $T$.
\item $s_b^{*(i)}$ and $\tilde{s}_b^{*(i)}$ are queries in $T$ for  all $i$ and $b$, and $x^* \in \bar{C}$.  
\item $s_b^{*(i)}$ and $\tilde{s}_b^{*(i)}$ are queries in $T$ for  all $i$ and $b$, and $x^* \not \in \bar{C}$. 
\end{enumerate}

\noindent
{\bf Case 1.} In the first case, $E$ does not examine $x^*$, which implies that $A$ wins if $y^{*\ell} = y^{*\ell}_0+y^{*\ell}_1 \in P_\ell$, and the event that $x^* \neq 1^\ell$ and $E$ returns $1^\ell$ does  not occur. We establish a lower bound on the probability that $y^{*\ell} \not \in P_\ell$ and derive an upper bound on the probability that $y^{*\ell} \in P_\ell$, which is an upper bound on  the probability that $A$ wins. Obviously, 
$$\mbox{Pr}[y^{*\ell} \not \in  P_\ell] \ge 
\mbox{Pr}[y^{*\ell} \not \in  P_\ell~|~\tilde{s}_b^{*(\ell)} \not \in T] \cdot \mbox{Pr}[\tilde{s}_b^{*(\ell)} \not \in T].$$

If $\tilde{s}_b^{*(\ell)} \not \in T$ then $\mbox{Pr}[y^{*\ell} \not \in  P_\ell]=\frac{|\G_\ell| - |P_\ell|}{|\G_\ell|}$ since $G$ is a random oracle and therefore $y^{*\ell}$ is randomly distributed in $\G_\ell$ and is independent of the view of $A$. 

To bound $\mbox{Pr}[\tilde{s}_b^{*(\ell)} \not \in T]$ we analyze the case that all the oracle  queries that  $(\IDPF.\Eval\Prefix)^G(b, k_0^*, \pp^*,x^*)$ makes in levels $j=i,\ldots,\ell$ are not in $T$. $(\IDPF.\Eval\Prefix)^G(b, k_0^*, \pp^*, x^*)$ makes two oracle calls in level $j$, one querying $s_b^{*(j-1)}$ and the second querying $\tilde{s}_b^{*(j)}$. Altogether, there are at most $2n$ such oracle calls, which we denote $q_1,\ldots,q_{2n}$ for convenience. Since $G$ is a random function and its values on $\{0,1\}^\lambda \setminus T$ are independent of the view of $A$, if $s_b^{*(j-1)} \not \in T$ then $\mbox{Pr}[\tilde{s}_b^{*(j)} \not \in T]=\frac{2^\lambda-t}{2^\lambda}$ and similarly, if $\tilde{s}_b^{*(j)} \not \in T$ then $\mbox{Pr}[s_b^{*(j)} \not \in T]=\frac{2^\lambda-t}{2^\lambda}$.  Therefore, $\mbox{Pr}[q_j \not \in T ~|~ q_1,\ldots,q_{j-1} \not \in  T] \ge \frac{2^\lambda-t}{2^\lambda}$. It follows that

\begin{eqnarray*}
\mbox{Pr}[s_b^{*(\ell)} \not \in T] & \ge & \mbox{Pr}[q_1,\ldots,q_{2n} \not \in T] \\
			& \ge & \left(1- \frac{t}{2^\lambda} \right)^{2n} \\
			& \ge & 1-\frac{2nt}{2^\lambda}
\end{eqnarray*}

and therefore, $\mbox{Pr}[y^{*\ell} \not \in P_\ell] \ge \left(1-\frac{|P_\ell|}{|\G_\ell|}\right)\left(1-\frac{2nt}{2^\lambda}\right)$ and
$$\mbox{Pr}[y^{*\ell} \in P_\ell] \le \frac{|P_\ell|}{|\G_\ell|}+\frac{2nt}{2^\lambda}.$$

\noindent
{\bf Case 2.} In the second case, $E$ does examine $x^*$, but outputs a different string of the same length $x^\ell$. That means that the $y$ values for both strings, $x^*$ and $x^\ell$ are in the valid set $P_\ell$. To bound the probability that this event occurs we consider  the construction of $\IDPF.\Eval\Prefix$.

There exist group elements $W^*_0, W^*_1,W_{CW} \in  \G_\ell$  such that:
$W^*_0+ W^*_1+W_{CW}=y^{*\ell}_0+y^{*\ell}_1 \in P_\ell$,
where $W^*_0$ is part of $G(\tilde{s}_0^{*\ell})$, $W^*_1$ is part of $G(\tilde{s}_1^{*\ell})$ and $W_{CW}$ is included in the public parameters $\pp$. It follows that there is a random element $W^*=W^*_0+W^*_1 \in \G_\ell$ and a part of the public parameters $W_{CW}$ such that $W^*+W_{CW} \in  P_\ell$. Using similar reasoning for the output value of the string $x^\ell$  it holds that there is a random  element $W^{(\ell)} \in \G_\ell$ such that $W^{(\ell)}+W_{CW} \in P_\ell$. Note that the same $W_{CW}$ is used for both strings since their length is equal.

Therefore, $W^*+W_{CW}-(W^{(\ell)}+W_{CW})=W^*-W^{(\ell)}$ is in $P_\ell^-$ and similarly $W^{(\ell)}-W^* \in P^-$. It follows that for the second case to occur, there must be an ordered pair of values among at most $t$ values that $A$ examines in the $\ell$-th level such that their difference is in $P_\ell^-$. The number of such ordered pairs is $t(t-1)$ and by union bound the probability that the difference between any pair is in $P^-$ is less than $\frac{|P^-_\ell|t^2}{|\G_\ell|}$.

\noindent
{\bf Case 3.} In the third case, all the oracle queries required to  evaluate $x^*$ are in $T$, but $E$ does not add $x^*$ to $C$ in its execution. The only way that could happen is that there are two different $i$-bit strings, $x^{*(i)}, x^{(i)}$ such that $s_b^{*(i)}=s_{b'}^{(i)}$, for $b,b' \in \{0,1\}$. In this situation, the If statement in Line \ref{ln:ifst} prevents the examination of $x^*$. 

Assuming that $x^{*(i)}$ is the first prefix of $x^*$ that triggers the If statement in Line \ref{ln:ifst}, there are two possible sub-cases. The first is that $\tilde{s}_b^{*(i)} \neq \tilde{s}_{b'}^{(i)}$, and the second is that $\tilde{s}_b^{*(i)} = \tilde{s}_{b'}^{(i)}$, but $s_b^{*(i-1)} \neq s_{b'}^{(i-1)}$. 

To bound the probability that $\tilde{s}_b^{*(i)} \neq \tilde{s}_{b'}^{(i)}$, but $s_b^{*(i)}=s_{b'}^{*(i)}$, note that $s_b^{*(i)}$ is a restriction of $G(\tilde{s}_b^{*(i)})$ to $\lambda$ bits, and $s_{b'}^{(i)}$ is a restriction of $G(\tilde{s}_{b'}^{(i)})$ to $\lambda$ bits. Since $G$ is a random function, the probability of finding two queries in $T$ that $G$ maps to the same output in $\{0,1\}^\lambda$ is at most $\binom{t}{2}/2^\lambda$. 

To bound the probability of the second sub-case, recall that either $\tilde{s}_b^{*(i)}=G(s_b^{*(i-1)})$ (abusing the notation to let $G(\cdot)$ denote the restriction of $G$ to the $\lambda$ bits used to derive $\tilde{s}_b$) or $\tilde{s}_b^{*(i)}=G(s_b^{*(i-1)}) \oplus s_{CW}$, for $s_{CW} \in \pp$. Similarly, either $\tilde{s}_{b'}^{(i)}=G(s_{b'}^{(i-1)})$ or $\tilde{s}_{b'}^{(i)}=G(s_{b'}^{(i-1)}) \oplus s_{CW}$.Thus, either $G(s_b^{*(i-1)})=G(s_{b'}^{(i-1)})$ or $G(s_b^{*(i-1)})=G(s_{b'}^{(i-1)}) \oplus s_{CW}$. The probability that either case occurs is at most $2\binom{t}{2}/2^\lambda$. 

By union bound, the probability that the third case occurs is at most $\frac{3\binom{t}{2}}{2^\lambda} \le \frac{3t^2}{2^\lambda}$.

taking a union bound over the three cases, we have that  
\begin{eqnarray*}
\epsilon_A	& \le	& \frac{|P_\ell|}{\G_\ell}+\frac{2nt}{2^\lambda}+\frac{|P^-_\ell|t^2}{|\G_\ell|}+\frac{3t^2}{2^\lambda} \\
						& \le & \frac{3t^2 + 2nt}{2^\lambda} + (t^2+1)\rho \\
						& \le & (4t^2+2nt+1)(\max \{\rho,1/2^\lambda\})
\end{eqnarray*}

\end{proof}

 }
\abbr{}{\section{Differential privacy details}
\label{app:dpx}

We describe how to modify the leakage profile of our
heavy-hitters protocol to satisfy a meaningful nation 
of differential privacy.
We claim no novelty of the technique we use to
provide differential privacy---similar ideas appear 
in prior work~\cite{DKMMN06,PrivEx,JJ16}
(see also the work of Balle et al.~\cite{BBGN20}). 
The main point here is that adding differential privacy to 
our system is simple.

\subsection{Implementing differential privacy}
The only information that the
protocol reveals to the servers about the clients' inputs
is the output of prefix-count oracle queries.
For the entire mechanism to provide
differential privacy, we need only
ensure that the outputs of these prefix-count
oracle queries satisfy differential privacy.

The presence or absence of a client's string in the
dataset can influence the value of any prefix-count oracle
query by $\pm 1$ at most. 
In this context, it is possible to achieve 
per-oracle-query $\epsilon$-differential privacy using the
Laplace mechanism~\cite{DR14}.
Specifically, the prefix-count oracle 
samples a noise value from the Laplace distribution with parameter $1/\epsilon$
and masking the oracle's output with this noise.

To implement this noise in the two-server setting
(in which one of the two servers may be malicious),
each of the two servers can sample and add these
noise values independently.
More specifically, in \cref{step:oracle} of \cref{proto:heavy}, 
when server $b \in \zo$ processes a prefix-count oracle query on 
prefix $p \in \zo^*$, the server also samples a noise value
$\nu_{p,b} \getsr \mathsf{Laplace}(1/\epsilon)$.
The server then publishes the noised value, rounded to the nearest integer:
$\val_{p,b}' \gets \val_{p,b} + \mathsf{Round}(\nu_{p,b}) \in \Z$.

When the servers run \cref{proto:heavy} looking for $t$-heavy hitters
(i.e., with heaviness threshold $t$) on strings of length $n$
with $C$ total clients, the total number of prefix-count oracle
queries they make is $q = n\cdot C/t$.
Applying the advanced composition theorem~\cite{DR14} for differential 
privacy, we find that if the per-query privacy parameter is $\epsilon$,
then for any $\delta' > 0$, the entire output of \cref{proto:heavy} satisfies 
$(\epsilon', \delta')$-differential privacy, where
$\epsilon' = \sqrt{2q\ln(1/\delta')}\cdot \epsilon + q \epsilon(e^\epsilon - 1)$.

In \abbr{the full version of this work~\cite{full}}{\cref{app:dpnoise}}, we calculate how many clients a deployment will
need to ensure that (with good probability) the differential-privacy noise 
will not change the server's view of which strings are the heavy hitters.
In \abbr{the full version of this work~\cite{full}}{\cref{app:dpfull}}, we 
give an example derivation of the differential-privacy parameters.

\subsection{Noise analysis}
\label{app:dpnoise}

Providing differential privacy inherently introduce some noise
into the protocol's output. 
We can, however, bound the probability that the noise is 
A standard tail bound on the Laplace distribution shows
that, for any $\lambda \geq 1$, the probability that the noise
that the servers collectively add has magnitude more than $2\lambda/\epsilon$
is at most $\exp(-\lambda)$ for a single query.
Applying a union bound across all $q$ queries shows that the chance
of a large deviation is then at most $q\exp(-\lambda)$.

If the deviation $2\lambda/\epsilon$ is much smaller than the
heavy-hitters threshold $t$, then extra noise will not cause 
correctness failures---false negatives (heavy hitters that the
servers do not output) or false positives (non-heavy hitters that
the servers do output).
If we take $2\lambda/\epsilon < 0.05t$, for example, then the noise
will never shift the weight on any potential heavy-hitter by 
more than $\pm 0.05t$.

If the heavy-hitters threshold $t = 0.01C$, for $C$ clients,
then choosing $\epsilon$ such that $4000\lambda/C < \epsilon$
will guarantee that servers' weight on a given string never 
deviates by more than $0.05t$.
So the servers will output all strings that at least $1.05t$ clients
hold and will not output any strings that fewer than $0.95t$ clients hold.

\subsection{Example parameter setting}
\label{app:dpfull}

A company that deploys our system must choose:
\begin{itemize}
\item the heavy-hitters threshold $t$,
\item the desired privacy parameters $(\epsilon', \delta')$, and
\item the maximum tolerable correctness-failure probability.
\end{itemize}

\begin{figure}
  \centering
  \includegraphics{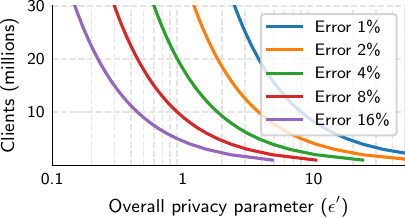}
  \caption{For a given privacy parameter $\epsilon'$, and a
          given correctness error (i.e., the percentage difference
          between the true weight on a string and the server-computed
          weight), the graph shows the minimum number of clients
          needed to participate. 
          We take the client's string length $n=256$, and we fix
          the private error $\delta' = 2^{-40}$.}
          \label{fig:budget}
\end{figure}

Concretely, when the servers search 
for heavy-hitters that at least 1\% of the
clients hold, we have $t = 0.01C$ and $q = 100n$.
If we allow differential privacy to fail to hold
with probability at most $\delta' = 2^{-40}$
and the clients hold strings of length $n = 256$,
then if the output of each oracle query satisfies 
$\epsilon$-differential privacy with $\epsilon = 0.001$,
the overall protocol output satisfies
$(\epsilon',\delta')$-differential privacy with
$(\epsilon',\delta') = (1.22, 2^{-40})$.

With this parameter setting, 
the two servers will collectively
add noise from the Laplace distribution 
with zero mean and parameter $2/\epsilon = 2000$.
So, if we accept a correctness failure one in a billion
protocol runs, we can take $\kappa = 30$ and the 
per-query noise will be bounded by $\pm 60,000$.

In a deployment with 50 million clients, a string is a $1\%$
heavy hitter if more than 500,000 clients hold the string.
So, the protocol will, with overwhelming probability, output
strings that more than $560,000$ clients hold and will not
output strings that fewer than $440,000$ clients hold.
\cref{fig:budget} shows how the minimum number of users needed
changes as a function of the privacy budget $\epsilon'$ and
the correctness error (i.e., the difference between a string's true 
weight and the weight that the servers compute for it).
 }

\end{document}